\documentclass{article}

\usepackage{arxiv}

\usepackage[utf8]{inputenc}
\usepackage[T1]{fontenc}
\usepackage{amsmath,amssymb,amsfonts,amsthm}
\usepackage{mathrsfs}
\usepackage{bm}
\usepackage{graphicx}
\usepackage{subcaption}
\usepackage{multirow}
\usepackage{booktabs}
\usepackage[title]{appendix}
\usepackage{xcolor}
\usepackage{textcomp}
\usepackage{algorithm}
\usepackage{algorithmicx}
\usepackage{algpseudocode}
\usepackage{listings}
\usepackage{microtype}
\usepackage{placeins}
\usepackage{url}
\usepackage[hidelinks]{hyperref}
\usepackage{natbib}
\usepackage{doi}
\usepackage{orcidlink}

\DeclareMathOperator{\E}{\mathbb{E}}

\newcommand{\IG}{\operatorname{IG}}
\newcommand{\GIG}{\operatorname{GIG}}
\newcommand{\GHSt}{\operatorname{GHSt}}

\theoremstyle{plain}
\newtheorem{theorem}{Theorem}
\newtheorem{proposition}[theorem]{Proposition}
\newtheorem{corollary}[theorem]{Corollary}
\theoremstyle{definition}

\theoremstyle{remark}

\title{Deep Skew-$t$ Mixture Models}

\author{
Jinran Wu\\
The University of Queensland, Australia\\
\texttt{jinran.wu@uq.edu.au}
\And
You-Gan Wang\\
Guangdong University of Finance and Economics, China\\
\texttt{you-gan.wang@uq.edu.au}
\And
Geoffrey J. McLachlan\\
The University of Queensland, Australia\\
\texttt{g.mclachlan@uq.edu.au}
}

\renewcommand{\shorttitle}{Deep Skew-$t$ Mixture Models}

\begin{document}
\maketitle

\begin{abstract}
High-dimensional clustering is challenging when component distributions are both heavy-tailed and directionally asymmetric. We propose a deep skew-$t$ mixture model (DStMM), a hierarchical factor-analytic mixture based on the generalised-hyperbolic skew-$t$ normal mean--variance representation. A shared inverse-gamma mixing variable is propagated along each complete latent pathway, allowing heavy tails and directional asymmetry to be modelled jointly while preserving conditional Gaussianity. Each complete pathway therefore admits an exact GHST marginal representation. We formalise the reductions to symmetric deep $t$, Gaussian deep-mixture, and single-layer GHST factor-analytic models, discuss local non-identifiability and the implementation-level parameter-counting convention, and derive the conditional generalised inverse Gaussian law used for estimation. Estimation is carried out by a stochastic/Monte Carlo EM algorithm, with an explicit implementation-based parameter count for BIC architecture comparison. Simulation studies show that DStMM performs similarly to the symmetric robust model when skewness is absent but provides increasing gains as directional asymmetry becomes stronger, particularly under heavier tails; the same qualitative behaviour persists under smaller samples and unequal mixture proportions. Two real-data applications provide complementary evidence. On the UCI handwritten-digit benchmark, DStMM gives the strongest clustering performance under a common deep architecture, while on the Gas Sensor Array Drift data, DStMM improves on both deep Gaussian and deep $t$ alternatives and, under the implemented BIC criterion, selects a non-trivial second mixture layer. Together, these results support the value of propagating skewness and heavy-tail variation through a deep latent mixture while retaining an exact pathway-level likelihood.
\end{abstract}

\keywords{Model-based clustering \and generalised-hyperbolic skew-$t$ distribution \and deep latent-variable mixtures \and normal mean--variance mixtures \and generalised inverse Gaussian distribution \and robust clustering \and dimension reduction}

\section{Introduction}
\label{sec:introduction}

Distributional shape is central to model-based clustering. Gaussian mixtures provide a convenient framework for heterogeneous data, but their component geometry is necessarily symmetric, and their likelihood can be sensitive to extreme observations. These limitations are especially important in high dimensions, where heavy tails can distort parameter estimation and directional asymmetry can encourage a symmetric model to represent one heterogeneous group by several fitted components. At the same time, unrestricted component covariance matrices quickly become expensive to estimate. A useful high-dimensional clustering model therefore needs to control covariance complexity while allowing departures from both Gaussian tail behaviour and elliptical symmetry.

Factor-analytic mixtures address the covariance problem by representing within-component dependence through lower-dimensional latent variables \citep{fokoue2003mixtures,mclachlan2003modelling}. The deep Gaussian mixture model (DGMM) of \citet{viroli2019deep} extends this idea by recursively composing factor-analytic mixture transitions, producing a hierarchy of progressively lower-dimensional latent states. This construction combines dimension reduction, local parameter sharing, and flexible marginal structure. Subsequent work has developed variational methods and extensions for mixed observations, missing data, topic models, and related deep probabilistic mixture settings \citep{kock2022variational, fuchs2022mixed, fuchs2022mi2ami, viroli2021deep, hamalainen2020deep, selosse2020bumpy}. Despite this flexibility, the Gaussian transitions underlying the DGMM remain vulnerable when the data contain substantial tail inflation or asymmetric cluster shapes. Robust mixture modelling addresses the first of these departures by replacing Gaussian components with multivariate Student-$t$ distributions. The corresponding scale-mixture representation down-weights observations with large Mahalanobis distances and has motivated robust mixture and factor-analytic models for high-dimensional data \citep{mclachlan1998robust, peel2000robust, mclachlan2007extension, baek2011mixtures, wang2022robust}. The robust deep mixture model (RDMM) of \citet{wu2026rdmm} carries this mechanism into a deep hierarchy by sharing one latent scale across all transitions on an observation's complete pathway. The resulting model substantially improves robustness to heavy tails, but its pathway components remain elliptically symmetric. Thus RDMM addresses tail weight without directly representing directional asymmetry.

A complementary literature has developed skew-normal and skew-$t$ mixture models for clusters exhibiting asymmetric component shapes; see \citet{lee2022overview} for an overview of skew distributions in model-based clustering. In particular, restricted skew-$t$ constructions have been used for robust factor analysis and mixtures of factor analysers \citep{lin2015robust, lin2018robust}, while the canonical fundamental skew-$t$ framework provides a broader formulation that unifies restricted and unrestricted skew-$t$ mixture models \citep{lee2016finite}. Related work has considered efficient EM-type estimation for multivariate skew-normal and skew-$t$ mixtures \citep{lee2018block}, as well as the role of skewness in latent factors and/or error terms in skew factor models \citep{lee2021formulations}. More generally, mixtures of factor analysers based on scale mixtures of fundamental skew-normal distributions provide a flexible framework for simultaneously accommodating dimension reduction, asymmetry, and heavy-tailed behaviour \citep{lee2021scale}. These developments demonstrate the usefulness of skewed factor-analytic components, but they are formulated within conventional single-layer latent-factor mixture architectures. They do not directly address how skewness and heavy-tail variation should be propagated coherently through a hierarchy of multiple latent mixture transitions.

The proposed construction also connects directly to a broader methodological literature on flexible mixture modelling. Relevant precedents include likelihood-based multivariate skew-normal mixtures \citep{lin2009jmva}, common-factor-analyser mixtures for high-dimensional data \citep{wang2013jmva}, skew-normal extensions of mixtures of factor models \citep{lin2016jmva}, maximum-likelihood inference for multivariate $t$ mixtures \citep{wanglin2016jmva}, and more general scale/shape and normal mean--variance mixture constructions \citep{arellano2018jmva,naderi2019jmva}; see also the overview of skew distributions in model-based clustering by \citet{lee2022overview}. Collectively, these contributions provide important ingredients for accommodating skewness, heavy tails, factor-analytic dimension reduction, and normal mean--variance mixing, but they are principally single-layer formulations. The DStMM unifies these strands through a common positive scale propagated across an entire multilayer pathway, while retaining an exact observation-level component distribution.

To accommodate both features within such a deep hierarchy, we build on the generalised-hyperbolic skew-$t$ (GHST) normal mean--variance representation \citep{aas2006generalized,murray2014mixtures}. Let $\bm Y\in\mathbb R^p$ denote a generic observed random vector and let $\bm S=(S^{(1)},\ldots,S^{(h)})$ denote its complete latent component pathway, with a realised path written $\bm s=(s_1,\ldots,s_h)$. For pathway $\bm s$, write $\bm\mu_{\bm s}$ for the induced location vector, $\bm\alpha_{\bm s}$ for the induced skewness vector, $\bm\Sigma_{\bm s}$ for the positive-definite pathway scale matrix, and $\nu_{\bm s}$ for the degrees-of-freedom parameter; their recursive construction is given in Section~\ref{sec:path-collapse}. The proposed DStMM then induces
\begin{equation*}
\bm Y\mid (\bm S=\bm s,W)
\sim
\mathcal N_p\!\left(
\bm\mu_{\bm s}+W\bm\alpha_{\bm s},
W\bm\Sigma_{\bm s}
\right),
\qquad
W\sim\IG\!\left(\frac{\nu_{\bm s}}2,\frac{\nu_{\bm s}}2\right),
\end{equation*}
where $W>0$ is the pathway-specific mixing variable and $\IG(a,b)$ denotes the inverse-gamma distribution under the shape--scale parameterisation stated in Section~\ref{sec:ghst-kernel}. The central modelling choice is that the same $W$ is reused throughout all latent transitions on the selected pathway. It therefore controls both variance inflation and a directional mean displacement. Because the hierarchy remains conditionally Gaussian given $W$, the latent Gaussian states can be marginalised exactly, and each complete pathway reduces to a multivariate GHST component at the observation level. This pathway-level closure is the main structural distinction of the model: skewness and heavy-tail effects are propagated coherently through the entire deep hierarchy rather than being specified only within a single observed-to-factor transition. This closure is statistically useful beyond algebraic convenience. It supplies an exact observed-data density for every complete pathway, gives interpretable pathway-level location, skewness, and scale parameters, yields analytic posterior pathway probabilities, and provides a well-defined likelihood for comparing candidate architectures. The stochastic simulation used later is therefore required for local latent-state updates, not for evaluating the observed pathway density itself. The DStMM consequently forms a natural extension of the existing deep mixture hierarchy while connecting with the broader literature on skewed factor-analytic mixtures. Setting all local skewness vectors to zero gives the Student-$t$ RDMM, while increasing the degrees of freedom removes random scale mixing and yields the DGMM after absorbing the limiting skewness shift into the pathway location. With one latent layer, the construction reduces to a GHST factor-analytic mixture under the corresponding covariance restrictions, linking the proposed model to existing skew-$t$ factor-analytic approaches \citep{murray2014mixtures, lin2018robust, lee2021scale}. The principal distinction is therefore not merely the introduction of skewness into a factor model, but its propagation through multiple latent mixture layers using a common mixing variable that preserves an exact observation-level GHST representation for every complete pathway. These reductions make it possible to compare Gaussian, symmetric heavy-tailed, and skewed heavy-tailed deep mixtures within a common latent architecture.

The DStMM makes four main contributions. First, we derive backward recursions showing that each complete pathway induces an exact GHST component and hence an exact observed-data likelihood despite the presence of multiple latent mixture layers. Second, we formalise the principal nested-model reductions, state an explicit pathway-level separation assumption for the observed finite mixture, and distinguish this assumption from the local rotational and coordinate ambiguities of the deep parameterisation, while stating explicitly the counting convention used by the implementation. Third, we derive the conditional generalised inverse Gaussian law and associated moments and use them within a stochastic/Monte Carlo EM scheme based on analytic pathway responsibilities, conditional Gaussian simulation, and weighted regression updates; we also give the explicit parameter count used by the implementation-based BIC criterion for architecture selection. Fourth, the model retains the parsimonious layer-wise mixture structure of existing deep mixture models while allowing skewness to be activated only where needed. Here, parsimony refers to the layer-wise sharing of transition and mixing-weight parameters; the degrees-of-freedom structure can be specified pathway-wise or pooled according to the application. The empirical evaluation uses two complementary designs. In the simulation study and the handwritten-digit application, DGMM, RDMM, and DStMM are compared under a common prespecified architecture so that differences can be attributed directly to the component distribution. In the gas-sensor application, the architecture is selected by the corresponding implementation-based BIC criterion within each deep-model family, providing a complementary data-driven comparison of clustering performance and whether a non-trivial second layer is selected.

The remainder of the paper is organised as follows. Section~\ref{sec:model} develops the GHST pathway construction and its exact marginal representation. Section~\ref{sec:estimation} presents the stochastic/Monte Carlo EM algorithm. Section~\ref{sec:simulations} investigates the separate and joint effects of tail weight and asymmetry, together with robustness and parameter recovery. Section~\ref{sec:realdata} presents the handwritten-digit and gas-sensor applications, and Section~\ref{sec:discussion} summarises the empirical findings and discusses further methodological extensions. Additional simulation diagnostics are collected in the Appendix.

\section{A deep pathway construction with GHST marginals}
\label{sec:model}

Let $\bm Y_j\in\mathbb R^p$, $j=1,\ldots,n$, denote the random observations and write $\bm y_j$ for their realised values. We consider $h$ latent layers with
\begin{equation*}
p=r_0>r_1>\cdots>r_h\geq1,
\end{equation*}
and write $\bm z_j^{(0)}=\bm Y_j$ and $\bm z_j^{(l)}\in\mathbb R^{r_l}$ for the latent state at layer $l$. The term ``deep'' refers to this hierarchy of latent mixture transitions rather than to a deterministic neural-network architecture.

\subsection{GHST mean--variance mixture as the pathway kernel}
\label{sec:ghst-kernel}

We parameterise the inverse-gamma distribution by shape and scale:
\begin{equation*}
f_{\mathrm{IG}}(w;a,b)
=
\frac{b^a}{\Gamma(a)}w^{-a-1}\exp\!\left(-\frac{b}{w}\right),
\qquad w>0.
\end{equation*}
For a complete pathway $\bm s$, the mixing variable will satisfy
\begin{equation}
W_{j\bm s}\mid (\bm S_j=\bm s)
\sim
\IG\!\left(\frac{\nu_{\bm s}}2,\frac{\nu_{\bm s}}2\right).
\label{eq:path-scale-prior}
\end{equation}
The usual Student-$t$ scale mixture is recovered when the conditional mean does not depend on $W_{j\bm s}$. In the GHST case, the conditional distribution instead has the generic form
\begin{equation}
\bm X\mid (W=w)
\sim
\mathcal N_p(\bm\mu+w\bm\alpha,w\bm\Sigma).
\label{eq:generic-ghst-normal}
\end{equation}
When $\nu>2$, $\E(W)=\nu/(\nu-2)$ is finite. If $\bm\alpha\neq\bm0$ and a finite covariance matrix is required, $\nu>4$ is needed because
\[
\operatorname{Cov}(\bm X)
=
\E(W)\bm\Sigma+\operatorname{Var}(W)\bm\alpha\bm\alpha^\top.
\]
We use this distinction explicitly below rather than imposing $\nu>4$ in settings where only the density is required.

\subsection{Mixture paths and local transition parameters}
\label{sec:path-hierarchy}

Layer $l$ contains $K_l$ local components. Let $S_j^{(l)}\in\{1,\ldots,K_l\}$ be the component indicator for observation $j$ at that layer and assume the prior factorisation
\begin{equation*}
\Pr\{S_j^{(l)}=a\}=\pi_a^{(l)},
\qquad
\sum_{a=1}^{K_l}\pi_a^{(l)}=1.
\end{equation*}
A complete path is
\[
\bm S_j=(S_j^{(1)},\ldots,S_j^{(h)}),
\qquad
\bm s=(s_1,\ldots,s_h)\in\mathcal S,
\qquad
\mathcal S=\prod_{l=1}^h\{1,\ldots,K_l\}.
\]
Because the layer indicators are independent before observing $\bm y_j$,
\begin{equation*}
\pi_{\bm s}=\Pr(\bm S_j=\bm s)=\prod_{l=1}^h\pi_{s_l}^{(l)}.
\end{equation*}
Hence the mixing proportions require only $\sum_l(K_l-1)$ free parameters even though there are $\prod_lK_l$ complete paths.

For local component $a$ at layer $l$, introduce
\[
\bm\eta_a^{(l)}\in\mathbb R^{r_{l-1}},\qquad
\bm\Lambda_a^{(l)}\in\mathbb R^{r_{l-1}\times r_l},\qquad
\bm\Psi_a^{(l)}\in\mathbb R^{r_{l-1}\times r_{l-1}},\qquad
\bm\delta_a^{(l)}\in\mathbb R^{r_{l-1}}.
\]
The first three quantities are respectively a local intercept, loading matrix, and specific covariance matrix; $\bm\delta_a^{(l)}$ controls the scale-dependent mean shift. Conditional on a selected path and its mixing variable, the bottom of the hierarchy is standardised as
\begin{equation}
\bm z_j^{(h)}\mid (\bm S_j=\bm s,W_{j\bm s})
\sim
\mathcal N_{r_h}\!\left(\bm0,W_{j\bm s}\bm I_{r_h}\right),
\label{eq:deepest-new}
\end{equation}
and, for $l=h,h-1,\ldots,1$,
\begin{equation}
\bm z_j^{(l-1)}\mid
(\bm z_j^{(l)},\bm S_j=\bm s,W_{j\bm s})
\sim
\mathcal N_{r_{l-1}}\!\left(
\bm\eta_{s_l}^{(l)}+\bm\Lambda_{s_l}^{(l)}\bm z_j^{(l)}+W_{j\bm s}\bm\delta_{s_l}^{(l)},
\;W_{j\bm s}\bm\Psi_{s_l}^{(l)}
\right).
\label{eq:transition-new}
\end{equation}
The same $W_{j\bm s}$ appears in every conditional distribution along the path. It therefore controls both dispersion and directional displacement at the observation level after the latent variables are marginalised.

Because several local $\bm\delta_a^{(l)}$ vectors can contribute to the same aggregate pathway skewness, we use an active set $\mathcal A\subseteq\{1,\ldots,h\}$ to specify a parsimonious allocation of skewness across depth and impose
\begin{equation*}
\bm\delta_a^{(l)}=\bm0\quad\text{whenever }l\notin\mathcal A.
\end{equation*}
The baseline specification takes $\mathcal A=\{1\}$, which introduces component-specific skewness at the observation transition while retaining symmetric deeper transitions conditional on the common scale. This default gives a direct and interpretable decomposition of pathway skewness. The distributional recursions below remain valid for larger active sets, but richer allocations of skewness require additional identifiability constraints and are not covered by the baseline BIC dimension count in Section~\ref{sec:bic-complexity}.

\subsection{Collapsing a deep path}
\label{sec:path-collapse}

For fixed $\bm s$, define three sequences of induced parameters. Start at the deepest layer with
\begin{equation*}
\bm\mu_{\bm s}^{(h)}=\bm0,
\qquad
\bm\alpha_{\bm s}^{(h)}=\bm0,
\qquad
\bm\Sigma_{\bm s}^{(h)}=\bm I_{r_h},
\end{equation*}
and move upward through the hierarchy according to
\begin{align}
\bm\mu_{\bm s}^{(l-1)}
&=\bm\eta_{s_l}^{(l)}+\bm\Lambda_{s_l}^{(l)}\bm\mu_{\bm s}^{(l)},
\label{eq:mu-rec-new}\\
\bm\alpha_{\bm s}^{(l-1)}
&=\bm\delta_{s_l}^{(l)}+\bm\Lambda_{s_l}^{(l)}\bm\alpha_{\bm s}^{(l)},
\label{eq:alpha-rec-new}\\
\bm\Sigma_{\bm s}^{(l-1)}
&=\bm\Psi_{s_l}^{(l)}+\bm\Lambda_{s_l}^{(l)}\bm\Sigma_{\bm s}^{(l)}\bm\Lambda_{s_l}^{(l)\top}.
\label{eq:sigma-rec-new}
\end{align}
Write $\bm\mu_{\bm s}=\bm\mu_{\bm s}^{(0)}$, $\bm\alpha_{\bm s}=\bm\alpha_{\bm s}^{(0)}$ and $\bm\Sigma_{\bm s}=\bm\Sigma_{\bm s}^{(0)}$.

\begin{proposition}\label{prop:path-collapse}
Conditional on $\bm S_j=\bm s$ and $W_{j\bm s}=w$, every latent layer has a Gaussian marginal of the form
\begin{equation}
\bm z_j^{(l)}\mid (\bm S_j=\bm s,W_{j\bm s}=w)
\sim
\mathcal N_{r_l}\!\left(
\bm\mu_{\bm s}^{(l)}+w\bm\alpha_{\bm s}^{(l)},
\;w\bm\Sigma_{\bm s}^{(l)}
\right),
\qquad l=0,\ldots,h.
\label{eq:layer-marginal-prop}
\end{equation}
Consequently,
\begin{equation}
\bm Y_j\mid (\bm S_j=\bm s,W_{j\bm s})
\sim
\mathcal N_p\!\left(
\bm\mu_{\bm s}+W_{j\bm s}\bm\alpha_{\bm s},
\;W_{j\bm s}\bm\Sigma_{\bm s}
\right).
\label{eq:path-conditional-new}
\end{equation}
If every local specific covariance $\bm\Psi_{s_l}^{(l)}$ is positive definite, then every recursively induced covariance $\bm\Sigma_{\bm s}^{(l)}$ is positive definite.
\end{proposition}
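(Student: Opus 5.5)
The plan is a backward (downward-index) induction on $l$, running from $l=h$ to $l=0$, with the path $\bm s$ and the value $W_{j\bm s}=w$ held fixed throughout. Conditioning on $(\bm S_j=\bm s,W_{j\bm s}=w)$ makes the hierarchy a linear-Gaussian chain with deterministic coefficients. In that chain, $w\bm\delta_{s_l}^{(l)}$ acts simply as an additional constant intercept and $w\bm\Psi_{s_l}^{(l)}$ as a fixed noise covariance.

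\emph{Base case.} By \eqref{eq:deepest-new} we have $\bm z_j^{(h)}\sim\mathcal N_{r_h}(\bm0,w\bm I_{r_h})$. This matches \eqref{eq:layer-marginal-prop} with $\bm\mu_{\bm s}^{(h)}=\bm0$, $\bm\alpha_{\bm s}^{(h)}=\bm0$ and $\bm\Sigma_{\bm s}^{(h)}=\bm I_{r_h}$.

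\emph{Inductive step.} Assume \eqref{eq:layer-marginal-prop} holds at layer $l$. By \eqref{eq:transition-new} we may write
\[
\bm z_j^{(l-1)}=\bm\eta_{s_l}^{(l)}+w\bm\delta_{s_l}^{(l)}+\bm\Lambda_{s_l}^{(l)}\bm z_j^{(l)}+\bm e,
\qquad
\bm e\sim\mathcal N_{r_{l-1}}(\bm0,w\bm\Psi_{s_l}^{(l)}),
\]
where $\bm e$ is independent of $\bm z_j^{(l)}$ given $(\bm s,w)$. This independence holds because the conditional law of $\bm e$ given $\bm z_j^{(l)}$ does not depend on $\bm z_j^{(l)}$.

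An affine image of a Gaussian vector plus an independent Gaussian vector is Gaussian, so $\bm z_j^{(l-1)}$ is Gaussian. One can verify this via characteristic functions, or by computing the joint density of $(\bm z_j^{(l)},\bm z_j^{(l-1)})$ and recognising it as jointly Gaussian.

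Its mean is
\[
\bm\eta_{s_l}^{(l)}+\bm\Lambda_{s_l}^{(l)}\bm\mu_{\bm s}^{(l)}+w\bigl(\bm\delta_{s_l}^{(l)}+\bm\Lambda_{s_l}^{(l)}\bm\alpha_{\bm s}^{(l)}\bigr).
\]
By \eqref{eq:mu-rec-new}--\eqref{eq:alpha-rec-new}, this equals $\bm\mu_{\bm s}^{(l-1)}+w\bm\alpha_{\bm s}^{(l-1)}$.

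Its covariance is
\[
w\bm\Lambda_{s_l}^{(l)}\bm\Sigma_{\bm s}^{(l)}\bm\Lambda_{s_l}^{(l)\top}+w\bm\Psi_{s_l}^{(l)}=w\bm\Sigma_{\bm s}^{(l-1)}
\]
by \eqref{eq:sigma-rec-new}. The key observation is that the factor $w$ is common to both terms, since the same $W_{j\bm s}$ scales the latent covariance and the noise. This is precisely what keeps the form $w\bm\Sigma$ closed under the recursion, and likewise keeps the mean linear in $w$.

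Taking $l=0$ and recalling $\bm z_j^{(0)}=\bm Y_j$ gives \eqref{eq:path-conditional-new}. Since the conditional law holds for every value $w>0$, it holds as a statement conditional on the random variable $W_{j\bm s}$.

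\emph{Positive definiteness.} This is a second induction. The base $\bm\Sigma_{\bm s}^{(h)}=\bm I_{r_h}$ is positive definite. If $\bm\Sigma_{\bm s}^{(l)}$ is positive definite, then $\bm\Lambda\bm\Sigma_{\bm s}^{(l)}\bm\Lambda^\top$ is positive semidefinite, because $\bm x^\top\bm\Lambda\bm\Sigma\bm\Lambda^\top\bm x=(\bm\Lambda^\top\bm x)^\top\bm\Sigma(\bm\Lambda^\top\bm x)\ge0$. Adding the positive definite $\bm\Psi_{s_l}^{(l)}$ then yields a positive definite matrix. In fact only the $\bm\Psi$ matrices are needed for this step, which makes it robust even if $\bm\Lambda$ is rank deficient.

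\emph{Main obstacle.} There is no real obstacle here. The only point requiring care is to justify the conditional independence of the transition noise from $\bm z_j^{(l)}$, and the joint Gaussianity, rigorously from the conditional specification \eqref{eq:transition-new}. I would settle this by a short characteristic-function computation: condition on $\bm z_j^{(l)}$, then take the outer expectation using the induction hypothesis.
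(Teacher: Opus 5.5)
Your proposal is correct and follows essentially the same route as the paper: backward induction from $l=h$, writing the transition as an affine map of $\bm z_j^{(l)}$ plus independent Gaussian noise, reading off the constant term, the $w$-coefficient and the covariance to recover the recursions, and a quadratic-form argument for positive definiteness. Your additional remarks are accurate refinements of that same argument, namely that the common factor $w$ keeps the form closed and that positive definiteness of the $\bm\Psi$ matrices alone drives the last step.
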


\begin{proof}
The assertion is proved by backward induction on depth. At $l=h$, \eqref{eq:deepest-new} gives \eqref{eq:layer-marginal-prop} with $\bm\mu_{\bm s}^{(h)}=\bm\alpha_{\bm s}^{(h)}=\bm0$ and $\bm\Sigma_{\bm s}^{(h)}=\bm I_{r_h}$. Suppose the result holds at level $l$. Writing the transition in \eqref{eq:transition-new} as an affine transformation of $\bm z_j^{(l)}$ plus an independent Gaussian error shows that the marginal at level $l-1$ is Gaussian. Collecting the constant mean term, the coefficient of $w$, and the covariance coefficient gives exactly \eqref{eq:mu-rec-new}--\eqref{eq:sigma-rec-new}. Moreover, for any nonzero vector $\bm v$,
\[
\bm v^\top\bm\Sigma_{\bm s}^{(l-1)}\bm v
=
\bm v^\top\bm\Psi_{s_l}^{(l)}\bm v
+
(\bm\Lambda_{s_l}^{(l)\top}\bm v)^\top
\bm\Sigma_{\bm s}^{(l)}
(\bm\Lambda_{s_l}^{(l)\top}\bm v)>0,
\]
so positive definiteness is preserved. Iterating to $l=0$ proves the proposition.
\end{proof}

Proposition~\ref{prop:path-collapse} is the key closure property of the model. The local hierarchy may contain several latent mixture layers, yet from the observation level a fixed path is still a single normal mean--variance mixture. This allows likelihood calculations to be carried out with pathway-level GHST densities without numerically integrating over the Gaussian latent states.

\subsection{Exact pathway density and mixture likelihood}
\label{sec:path-density}

For later use, define
\begin{equation*}
d^2(\bm y;\bm\mu,\bm\Sigma)
=(\bm y-\bm\mu)^\top\bm\Sigma^{-1}(\bm y-\bm\mu),
\qquad
q(\bm\alpha;\bm\Sigma)=\bm\alpha^\top\bm\Sigma^{-1}\bm\alpha.
\end{equation*}
Under \eqref{eq:generic-ghst-normal} and the inverse-gamma law in \eqref{eq:path-scale-prior}, we write
\[
\bm Y\sim\GHSt_p(\bm\mu,\bm\Sigma,\bm\alpha,\nu).
\]
For $q(\bm\alpha;\bm\Sigma)>0$, its density is
\begin{align*}
g_{\mathrm{GHST},p}(\bm y;\bm\mu,\bm\Sigma,\bm\alpha,\nu)
={}&
\frac{\nu^{\nu/2}}
{(2\pi)^{p/2}|\bm\Sigma|^{1/2}\Gamma(\nu/2)2^{\nu/2-1}}
\exp\!\left\{(\bm y-\bm\mu)^\top\bm\Sigma^{-1}\bm\alpha\right\}\\
&\times
\left[
\frac{\nu+d^2(\bm y;\bm\mu,\bm\Sigma)}
{q(\bm\alpha;\bm\Sigma)}
\right]^{-(\nu+p)/4}
K_{(\nu+p)/2}\!\left(
\sqrt{q(\bm\alpha;\bm\Sigma)\{\nu+d^2(\bm y;\bm\mu,\bm\Sigma)\}}
\right),
\end{align*}
where $K_a(\cdot)$ denotes the modified Bessel function of the third kind. The continuous limit as $\bm\alpha\rightarrow\bm0$ is the multivariate Student-$t$ density.

Combining Proposition~\ref{prop:path-collapse} with \eqref{eq:path-scale-prior} gives
\begin{equation}
\bm Y_j\mid (\bm S_j=\bm s)
\sim
\GHSt_p(\bm\mu_{\bm s},\bm\Sigma_{\bm s},\bm\alpha_{\bm s},\nu_{\bm s}).
\label{eq:path-ghst-new}
\end{equation}
Let $\bm\Theta$ denote the collection of all free mixture weights, local transition parameters, active skewness parameters, and degrees-of-freedom parameters. The full observed density is therefore a finite mixture over complete paths,
\begin{equation*}
f(\bm y_j;\bm\Theta)
=
\sum_{\bm s\in\mathcal S}
\pi_{\bm s}
g_{\mathrm{GHST},p}(\bm y_j;\bm\mu_{\bm s},\bm\Sigma_{\bm s},\bm\alpha_{\bm s},\nu_{\bm s}),
\end{equation*}
with log-likelihood
\begin{equation}
\ell(\bm\Theta)
=
\sum_{j=1}^n
\log\!\left[
\sum_{\bm s\in\mathcal S}
\pi_{\bm s}
g_{\mathrm{GHST},p}(\bm y_j;\bm\mu_{\bm s},\bm\Sigma_{\bm s},\bm\alpha_{\bm s},\nu_{\bm s})
\right].
\label{eq:observed-loglik-new}
\end{equation}
Although the prior path probability factorises across layers, the posterior distribution of the layer indicators generally does not, because the pathway likelihood depends jointly on all indices in $\bm s$.

\subsection{Posterior path allocation and conditional mixing law}
\label{sec:posterior-model}

The posterior probability of path $\bm s$ is
\begin{equation}
\tau_{j\bm s}
=
\frac{
\pi_{\bm s}g_{\mathrm{GHST},p}(\bm y_j;\bm\mu_{\bm s},\bm\Sigma_{\bm s},\bm\alpha_{\bm s},\nu_{\bm s})
}{
\sum_{\bm r\in\mathcal S}
\pi_{\bm r}g_{\mathrm{GHST},p}(\bm y_j;\bm\mu_{\bm r},\bm\Sigma_{\bm r},\bm\alpha_{\bm r},\nu_{\bm r})
}.
\label{eq:tau-new}
\end{equation}
The marginal posterior probability for local component $a$ at layer $l$ follows by summing over paths containing that component:
\begin{equation*}
\tau_{ja}^{(l)}=\sum_{\bm s:s_l=a}\tau_{j\bm s}.
\end{equation*}

The scale update is conveniently expressed through the generalised inverse Gaussian distribution. We use
\begin{equation}
f_{\mathrm{GIG}}(w;\psi,\chi,\lambda)
=
\frac{(\psi/\chi)^{\lambda/2}}
{2K_\lambda(\sqrt{\psi\chi})}
 w^{\lambda-1}
\exp\!\left[-\frac12\left(\psi w+\frac{\chi}{w}\right)\right],
\qquad w>0.
\label{eq:gig-density-new}
\end{equation}
We write $W\sim\GIG(\psi,\chi,\lambda)$ for the distribution with density \eqref{eq:gig-density-new}. For each observation--path pair set
\begin{align}
D_{j\bm s}
&=(\bm y_j-\bm\mu_{\bm s})^\top\bm\Sigma_{\bm s}^{-1}(\bm y_j-\bm\mu_{\bm s}),
\label{eq:D-new}\\
A_{\bm s}
&=\bm\alpha_{\bm s}^\top\bm\Sigma_{\bm s}^{-1}\bm\alpha_{\bm s}.
\label{eq:A-new}
\end{align}
\begin{proposition}[Conditional mixing law]\label{prop:conditional-gig}
For a fixed observation--path pair, let $D_{j\bm s}$ and $A_{\bm s}$ be defined by \eqref{eq:D-new}--\eqref{eq:A-new}. If $A_{\bm s}>0$, then under the GIG parameterisation in \eqref{eq:gig-density-new},
\begin{equation*}
W_{j\bm s}\mid (\bm y_j,\bm S_j=\bm s)
\sim
\GIG\!\left(
A_{\bm s},\nu_{\bm s}+D_{j\bm s},-\frac{\nu_{\bm s}+p}{2}
\right).
\end{equation*}
Writing $\psi=A_{\bm s}$, $\chi=\nu_{\bm s}+D_{j\bm s}$ and $\lambda=-(\nu_{\bm s}+p)/2$, its conditional moments satisfy
\begin{equation}
\E(W_{j\bm s}^{q}\mid (\bm y_j,\bm S_j=\bm s))
=
\left(\frac{\chi}{\psi}\right)^{q/2}
\frac{K_{\lambda+q}(\sqrt{\psi\chi})}{K_{\lambda}(\sqrt{\psi\chi})}.
\label{eq:gig-moment-new}
\end{equation}
When $A_{\bm s}=0$, the continuous limit is
\begin{equation}
W_{j\bm s}\mid (\bm y_j,\bm S_j=\bm s)
\sim
\IG\!\left(
\frac{\nu_{\bm s}+p}{2},\frac{\nu_{\bm s}+D_{j\bm s}}{2}
\right),
\label{eq:W-symmetric-new}
\end{equation}
so that
\begin{equation}
\E(W_{j\bm s}^{-1}\mid (\bm y_j,\bm S_j=\bm s))
=
\frac{\nu_{\bm s}+p}{\nu_{\bm s}+D_{j\bm s}}.
\label{eq:t-weight-new}
\end{equation}
\end{proposition}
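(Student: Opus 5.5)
We will compute the posterior of $W_{j\bm s}$ by Bayes' rule, using the pathway-level conditional Gaussian law from Proposition~\ref{prop:path-collapse}. We do not need to return to the latent layers: by \eqref{eq:path-conditional-new}, the conditional density of $\bm y_j$ given $(\bm S_j=\bm s,W_{j\bm s}=w)$ is $\mathcal N_p(\bm\mu_{\bm s}+w\bm\alpha_{\bm s},w\bm\Sigma_{\bm s})$. We multiply it by the inverse-gamma prior \eqref{eq:path-scale-prior} and keep only the factors that depend on $w$.

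Writing $\bm e=\bm y_j-\bm\mu_{\bm s}$, we expand the Gaussian quadratic form:
\[
(\bm e-w\bm\alpha_{\bm s})^\top(w\bm\Sigma_{\bm s})^{-1}(\bm e-w\bm\alpha_{\bm s})
=\frac{D_{j\bm s}}{w}-2\,\bm e^\top\bm\Sigma_{\bm s}^{-1}\bm\alpha_{\bm s}+wA_{\bm s}.
\]
The cross term does not depend on $w$, so it can be absorbed into the normalising constant. The determinant contributes $w^{-p/2}$, and the prior contributes $w^{-\nu_{\bm s}/2-1}\exp\{-\nu_{\bm s}/(2w)\}$. Combining these, the unnormalised posterior is
\[
w^{-(\nu_{\bm s}+p)/2-1}\exp\!\left[-\tfrac12\left(A_{\bm s}w+\frac{\nu_{\bm s}+D_{j\bm s}}{w}\right)\right].
\]
Matching this kernel with \eqref{eq:gig-density-new} gives $\psi=A_{\bm s}$, $\chi=\nu_{\bm s}+D_{j\bm s}$ and $\lambda=-(\nu_{\bm s}+p)/2$. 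Since $\psi,\chi>0$ when $A_{\bm s}>0$, the GIG density normalises, and uniqueness of the normalised density identifies the law.

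For the moments \eqref{eq:gig-moment-new}, note that $w^q f_{\mathrm{GIG}}(w;\psi,\chi,\lambda)$ is, up to constants, the GIG kernel with index $\lambda+q$. We integrate it using the normalising constant in \eqref{eq:gig-density-new}, which amounts to $\int_0^\infty w^{\lambda-1}e^{-(\psi w+\chi/w)/2}\,dw=2(\chi/\psi)^{\lambda/2}K_\lambda(\sqrt{\psi\chi})$. Taking the ratio of these integrals at indices $\lambda+q$ and $\lambda$ gives the stated formula. It is valid for every real $q$, because $K$ is finite for positive argument.

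For $A_{\bm s}=0$ there are two routes, and we will present both. The direct route repeats the kernel computation with $\psi=0$. The kernel is then $w^{-(\nu_{\bm s}+p)/2-1}e^{-(\nu_{\bm s}+D_{j\bm s})/(2w)}$, which is exactly the $\IG((\nu_{\bm s}+p)/2,(\nu_{\bm s}+D_{j\bm s})/2)$ density. For the ``continuous limit'' claim, we use the small-argument asymptotic $K_\lambda(x)\sim\tfrac12\Gamma(|\lambda|)(x/2)^{-|\lambda|}$ for $\lambda\ne0$ and $K_{-\lambda}=K_\lambda$. With these, the GIG normalising constant converges pointwise to the inverse-gamma one as $\psi\downarrow0$, and Scheffé's lemma upgrades this to convergence in total variation. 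The main technical care is at this step: one must check that $\chi^{\lambda/2}\psi^{\lambda/2}$ times the Bessel asymptotic produces exactly $(\chi/2)^{a}/\Gamma(a)$ with $a=(\nu_{\bm s}+p)/2$. Finally, \eqref{eq:t-weight-new} is the standard inverse-gamma identity $\E(W^{-1})=a/b$, which gives $(\nu_{\bm s}+p)/(\nu_{\bm s}+D_{j\bm s})$.
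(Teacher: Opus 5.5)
Your proof is correct and is essentially the paper's argument: Bayes' rule on the collapsed pathway law \eqref{eq:path-conditional-new} with the inverse-gamma prior, dropping the $w$-free cross term so the kernel matches \eqref{eq:gig-density-new}, then the GIG moment identity and the $\psi=0$ inverse-gamma case; you additionally spell out the normalising-constant ratio and justify the $\psi\downarrow 0$ limit via Bessel small-argument asymptotics and Scheff\'e, both of which the paper takes as standard. One small slip: the GIG prefactor is $(\psi/\chi)^{\lambda/2}=\psi^{\lambda/2}\chi^{-\lambda/2}$, not $\chi^{\lambda/2}\psi^{\lambda/2}$, and with the correct factor the limit is indeed $(\chi/2)^{a}/\Gamma(a)$.
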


\begin{proof}
Expanding the quadratic form in \eqref{eq:path-conditional-new} gives a kernel proportional to
\[
w^{-(\nu_{\bm s}+p)/2-1}
\exp\!\left[
-\frac12
\left(
A_{\bm s}w
+\frac{\nu_{\bm s}+D_{j\bm s}}{w}
\right)
\right].
\]
after dropping the factor independent of $w$. This is the GIG kernel in \eqref{eq:gig-density-new} with the stated parameters. Equation~\eqref{eq:gig-moment-new} is the standard GIG moment identity under the same parameterisation. Setting $A_{\bm s}=0$ gives the inverse-gamma limit and \eqref{eq:t-weight-new} follows immediately.
\end{proof}

Thus one posterior variable governs two related effects: larger $W_{j\bm s}$ increases conditional dispersion and also shifts the conditional location along $\bm\alpha_{\bm s}$; in the symmetric limit, Proposition~\ref{prop:conditional-gig} recovers the familiar Student-$t$ down-weighting rule.

\subsection{Nested-model reductions}
\label{sec:nesting-identifiability}

\begin{corollary}[Principal reductions]\label{cor:nested-models}
Under the pathway construction in Proposition~\ref{prop:path-collapse}, the following reductions hold.
\begin{enumerate}
\item \emph{Symmetric heavy-tailed limit.} If $\bm\delta_a^{(l)}=\bm0$ for every $a$ and $l$, then $\bm\alpha_{\bm s}=\bm0$ for every complete path and the DStMM reduces exactly to the Student-$t$ RDMM of \citet{wu2026rdmm}.
\item \emph{Gaussian limit.} If $\nu_{\bm s}\rightarrow\infty$, then $W_{j\bm s}\rightarrow1$ in probability and
\[
\bm Y_j\mid (\bm S_j=\bm s)\;\Longrightarrow\;
\mathcal N_p(\bm\mu_{\bm s}+\bm\alpha_{\bm s},\bm\Sigma_{\bm s}).
\]
Equivalently, at the local level the limiting shift $\bm\delta_a^{(l)}$ can be absorbed into the corresponding intercept, giving the DGMM parameterisation.
\item \emph{Single-layer reduction.} When $h=1$, the observed mixture is a GHST mixture of factor analysers with
$\bm\mu_s=\bm\eta_s^{(1)}$, $\bm\alpha_s=\bm\delta_s^{(1)}$, and
$\bm\Sigma_s=\bm\Psi_s^{(1)}+\bm\Lambda_s^{(1)}\bm\Lambda_s^{(1)\top}$ under the baseline deepest-layer standardisation.
\end{enumerate}
\end{corollary}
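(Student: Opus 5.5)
All three parts will be read off Proposition~\ref{prop:path-collapse} and the recursions \eqref{eq:mu-rec-new}--\eqref{eq:sigma-rec-new}, taking each reduction in turn.

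For the \emph{symmetric limit}, I will show by backward induction on \eqref{eq:alpha-rec-new} that $\bm\alpha_{\bm s}^{(l)}=\bm0$ for every $l$. The base case is $\bm\alpha_{\bm s}^{(h)}=\bm0$. If $\bm\alpha_{\bm s}^{(l)}=\bm0$ and $\bm\delta_{s_l}^{(l)}=\bm0$, then $\bm\alpha_{\bm s}^{(l-1)}=\bm0$. With all $\bm\delta$ vanishing, the transitions \eqref{eq:transition-new} become $\mathcal N(\bm\eta+\bm\Lambda\bm z^{(l)},W\bm\Psi)$ with a single shared $W\sim\IG(\nu_{\bm s}/2,\nu_{\bm s}/2)$ on the pathway. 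This is exactly the RDMM hierarchy of \citet{wu2026rdmm}, so the two generative models coincide. At the observation level, \eqref{eq:path-conditional-new} then gives a scale mixture whose marginal is the multivariate $t$, consistent with the $\bm\alpha\to\bm0$ limit of the GHST density.

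For the \emph{Gaussian limit}, $W\sim\IG(\nu/2,\nu/2)$ has mean $\nu/(\nu-2)\to1$ and variance $\nu^2/\{(\nu-2)^2(\nu-4)\}\to0$ for $\nu>4$. Chebyshev's inequality therefore gives $W\to1$ in probability. Conditioning on the path, the characteristic function of $\bm Y$ is
\[
\E\exp\{i\bm t^\top(\bm\mu_{\bm s}+W\bm\alpha_{\bm s})-\tfrac12 W\bm t^\top\bm\Sigma_{\bm s}\bm t\},
\]
whose integrand is a bounded continuous function of $W$ (its modulus is at most $1$). By the continuous mapping theorem and bounded convergence, this tends to the $\mathcal N_p(\bm\mu_{\bm s}+\bm\alpha_{\bm s},\bm\Sigma_{\bm s})$ characteristic function, and L\'evy's continuity theorem yields the stated weak convergence. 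For the local statement, set $W=1$ in \eqref{eq:transition-new} and replace $\bm\eta^{(l)}_a$ by $\tilde{\bm\eta}^{(l)}_a=\bm\eta^{(l)}_a+\bm\delta^{(l)}_a$. Running \eqref{eq:mu-rec-new} with $\tilde{\bm\eta}$ gives $\tilde{\bm\mu}_{\bm s}=\bm\mu_{\bm s}+\bm\alpha_{\bm s}$, since the two recursions add linearly under the same $\bm\Lambda$ maps. Hence the limiting pathway Gaussians are exactly those of a DGMM with intercepts $\tilde{\bm\eta}$.

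For the \emph{single-layer reduction}, I will apply the recursions once from $\bm\mu^{(1)}=\bm0$, $\bm\alpha^{(1)}=\bm0$, $\bm\Sigma^{(1)}=\bm I$. This gives directly $\bm\mu_s=\bm\eta_s^{(1)}$, $\bm\alpha_s=\bm\delta_s^{(1)}$ and $\bm\Sigma_s=\bm\Psi_s^{(1)}+\bm\Lambda_s^{(1)}\bm\Lambda_s^{(1)\top}$. Combining this with \eqref{eq:path-ghst-new} and $\pi_s=\pi^{(1)}_s$ shows that the mixture is a GHST mixture of factor analysers.

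The only step requiring care is the Gaussian limit. There I must specify the mode of convergence, note that $\nu>4$ (eventually true as $\nu\to\infty$) justifies the variance bound, and observe that the skewness term survives as a location shift rather than vanishing.
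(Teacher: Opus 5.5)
Your proposal is correct and follows the paper's proof in all three parts: the $\bm\alpha$ recursion for Part~1, $W_{j\bm s}\to1$ in probability for Part~2, and a single pass of the recursions for Part~3. The differences are that you close Part~2 with characteristic functions and L\'evy's continuity theorem instead of the paper's Slutsky argument on $\bm\mu_{\bm s}+W\bm\alpha_{\bm s}+W^{1/2}\bm\Sigma_{\bm s}^{1/2}\bm Z$, that you explicitly verify the intercept absorption $\tilde{\bm\mu}_{\bm s}=\bm\mu_{\bm s}+\bm\alpha_{\bm s}$ which the paper leaves implicit, and that you have one slip: the variance of $\IG(\nu/2,\nu/2)$ is $2\nu^2/\{(\nu-2)^2(\nu-4)\}$, not $\nu^2/\{(\nu-2)^2(\nu-4)\}$, though it still tends to zero, so your Chebyshev step stands.
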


\begin{proof}
Part 1 follows from the skewness recursion \eqref{eq:alpha-rec-new}. For Part 2, an inverse-gamma variable with equal shape and scale $\nu_{\bm s}/2$ converges in probability to one as $\nu_{\bm s}\rightarrow\infty$; Slutsky's theorem applied to \eqref{eq:path-conditional-new} gives the Gaussian limit. Part 3 follows by evaluating the recursions \eqref{eq:mu-rec-new}--\eqref{eq:sigma-rec-new} once from the standardised deepest layer.
\end{proof}

\subsection{Pathway-level separation and the role of the skewness active set}
\label{sec:identifiability}

Two distinct levels of identifiability are relevant. At the observation level, the model is a finite mixture of the pathway GHST distributions in \eqref{eq:path-ghst-new}. Rather than assert unrestricted global identifiability of the full GHST pathway family, we adopt the following \emph{pathway-separation assumption}: on the parameter region under consideration, if two finite positive-weight pathway mixtures define the same observed distribution, then their component weights and GHST component distributions agree up to a permutation of the pathway labels. This is the standard finite-mixture notion of identifiability modulo label switching. General criteria based on linear independence of the component family are given by \citet{yakowitz1968identifiability}, while closely related results for generalised-hyperbolic normal mean--variance mixtures are established by \citet{browne2015mixture}. These references provide theoretical context for the assumption; they are not invoked as a proof of unrestricted global identifiability for every boundary parameterisation of the present deep GHST model.

At the local level, factor-loading coordinates are not unique. For each layer $l$ and local component $a$, the M-step estimates $\bm\eta_a^{(l)}$, $\bm\Lambda_a^{(l)}$, and, on an active layer, $\bm\delta_a^{(l)}$ jointly by weighted regression. The baseline covariance update retains only the diagonal residual variances and bounds each diagonal entry below by the numerical floor $\psi_{\min}>0$. Hence positivity and diagonality of $\bm\Psi_a^{(l)}$ are imposed numerically, whereas the fitted loading matrices are not post-rotated to force
$(\bm\Lambda_a^{(l)})^\top(\bm\Psi_a^{(l)})^{-1}\bm\Lambda_a^{(l)}$
to be diagonal. For the implementation-based BIC count, the loading block is assigned the effective dimension
\begin{equation}
 d_{\Lambda,l}^{\mathrm{code}}
 =r_{l-1}r_l-\frac{r_l(r_l-1)}{2},
\label{eq:loading-ident-new}
\end{equation}
that is, the $r_{l-1}r_l$ loading coefficients minus the standard $r_l(r_l-1)/2$ rotational non-identifiability correction adopted in the implementation-based parameter count. Equation~\eqref{eq:loading-ident-new} is therefore a parameter-counting convention rather than a constraint imposed on the stored loading matrices.

The hierarchy also admits intermediate latent-coordinate reparameterisations. For $l=1,\ldots,h-1$, let $\bm D_l$ be any positive diagonal $r_l\times r_l$ matrix and let $\bm c_l\in\mathbb R^{r_l}$. Replacing
\begin{align*}
\widetilde{\bm\eta}_a^{(l)}
&= \bm\eta_a^{(l)}
 - \bm\Lambda_a^{(l)}\bm D_l^{-1}\bm c_l,
&
\widetilde{\bm\Lambda}_a^{(l)}
&= \bm\Lambda_a^{(l)}\bm D_l^{-1},
\\[3pt]
\widetilde{\bm\eta}_b^{(l+1)}
&= \bm D_l\bm\eta_b^{(l+1)}+\bm c_l,
&
\widetilde{\bm\Lambda}_b^{(l+1)}
&= \bm D_l\bm\Lambda_b^{(l+1)},
\\[-1pt]
\widetilde{\bm\Psi}_b^{(l+1)}
&= \bm D_l\bm\Psi_b^{(l+1)}\bm D_l,
&
\widetilde{\bm\delta}_b^{(l+1)}
&= \bm D_l\bm\delta_b^{(l+1)}
\end{align*}
leaves the induced observed distribution unchanged when the adjacent latent coordinates are transformed coherently. The fitting implementation leaves this intermediate-coordinate gauge unfixed: it does not impose a reference-component normalisation such as $\bm\eta_1^{(l+1)}=\bm0$ and $\bm\Psi_1^{(l+1)}=\bm I$, and its BIC count does not subtract additional gauge dimensions. Accordingly, the model-selection criterion reported below uses the software parameter count rather than a separately gauge-quotiented model dimension.

Skewness requires a separate allocation restriction. From \eqref{eq:alpha-rec-new}, several nonzero local vectors can contribute to the same aggregate $\bm\alpha_{\bm s}$. Our baseline choice $\mathcal A=\{1\}$ assigns directional asymmetry to the observation transition and leaves deeper transitions conditionally symmetric. This restriction is implemented directly by estimating $\bm\delta_a^{(l)}$ only when $l\in\mathcal A$ and setting the remaining local skewness vectors to zero. Richer active sets are available in the implementation, but they can introduce additional allocation ambiguities and therefore require care when interpreting the corresponding parameter count.

\section{Estimation by stochastic/Monte Carlo EM}
\label{sec:estimation}

The normal mean--variance mixture representation leads to a convenient augmented-data scheme. We retain complete-path probabilities as soft weights, conditionally simulate the positive mixing variables and Gaussian latent states, and then update local parameters from weighted Gaussian regressions. With one simulated draw per observation--path pair, the procedure has the character of a stochastic EM algorithm; using several draws produces a Monte Carlo EM approximation with reduced simulation noise \citep{nielsen2000stochastic,levine2001mcem}.

More explicitly, write
\[
\bm\Theta=
\left\{
\pi_a^{(l)},
\bm\eta_a^{(l)},
\bm\Lambda_a^{(l)},
\bm\Psi_a^{(l)},
\bm\delta_a^{(l)},
\nu_{\bm s}
\right\},
\]
where $\bm\delta_a^{(l)}=\bm0$ for inactive skewness layers.

\subsection{Augmented likelihood}
\label{sec:augmented-likelihood}

Let $u_{j\bm s}=\mathbb I(\bm S_j=\bm s)$, where $\mathbb I(\cdot)$ is the indicator function, and let $\phi_d(\cdot;\bm m,\bm V)$ denote the $d$-variate Gaussian density with mean $\bm m$ and covariance matrix $\bm V$. Conditional on $u_{j\bm s}=1$, the missing quantities are $W_{j\bm s}$ and $\bm z_j^{(1)},\ldots,\bm z_j^{(h)}$. Apart from constants independent of $\bm\Theta$, the complete-data log-likelihood is
\begin{align}
\ell_c(\bm\Theta)
=
\sum_{j=1}^n\sum_{\bm s\in\mathcal S}u_{j\bm s}
\Bigg\{
&\sum_{l=1}^h\log\pi_{s_l}^{(l)}
+
\log f_{\mathrm{IG}}\!\left(W_{j\bm s};\frac{\nu_{\bm s}}2,\frac{\nu_{\bm s}}2\right)
\nonumber\\
&+
\log\phi_{r_h}\!\left(\bm z_j^{(h)};\bm0,W_{j\bm s}\bm I_{r_h}\right)
\nonumber\\
&+
\sum_{l=1}^h
\log\phi_{r_{l-1}}\!\left(
\bm z_j^{(l-1)};
\bm\eta_{s_l}^{(l)}+\bm\Lambda_{s_l}^{(l)}\bm z_j^{(l)}+W_{j\bm s}\bm\delta_{s_l}^{(l)},
W_{j\bm s}\bm\Psi_{s_l}^{(l)}
\right)
\Bigg\}.
\label{eq:complete-loglik-new}
\end{align}
The deepest Gaussian term carries no free mean or scale-matrix parameters under the standardisation in \eqref{eq:deepest-new}.

\subsection{Path probabilities and scale calculations}
\label{sec:path-scale-estimation}

Suppose $\bm\Theta^{(t)}$ is available. The first calculation at iteration $t$ is the set of pathway probabilities $\tau_{j\bm s}^{(t)}$ from \eqref{eq:tau-new}. In numerical work, we evaluate
\[
a_{j\bm s}^{(t)}
=
\log\pi_{\bm s}^{(t)}
+
\log g_{\mathrm{GHST},p}\!\left(
\bm y_j;
\bm\mu_{\bm s}^{(t)},
\bm\Sigma_{\bm s}^{(t)},
\bm\alpha_{\bm s}^{(t)},
\nu_{\bm s}^{(t)}
\right)
\]
and normalise with a log-sum-exp operation. When $A_{\bm s}^{(t)}$ is numerically negligible, the Student-$t$ limiting density is used directly instead of evaluating the Bessel-function form at a near-degenerate skewness value.

For each $j$ and $\bm s$, form $D_{j\bm s}^{(t)}$ and $A_{\bm s}^{(t)}$ from \eqref{eq:D-new}--\eqref{eq:A-new}. The positive mixing variable may then be sampled as
\begin{equation}
W_{j\bm s,m}^{(t)}
\sim
\GIG\!\left(
A_{\bm s}^{(t)},
\nu_{\bm s}^{(t)}+D_{j\bm s}^{(t)},
-\frac{\nu_{\bm s}^{(t)}+p}{2}
\right),
\qquad m=1,\ldots,M.
\label{eq:W-draw-new}
\end{equation}
For symmetric paths, the inverse-gamma law in \eqref{eq:W-symmetric-new} is used instead.

Analytic GIG moments are useful for stabilising updates that depend only on $W$. Equation~\eqref{eq:gig-moment-new} in Proposition~\ref{prop:conditional-gig} supplies $\E(W^q)$ directly; in particular, $q=-1$ gives the precision weight $\E(W^{-1})$. The logarithmic moment can be written as
\begin{equation}
\E(\log W)
=
\frac12\log\!\left(\frac{\chi}{\psi}\right)
+
\frac{\partial}{\partial\lambda}\log K_\lambda(\sqrt{\psi\chi}),
\label{eq:gig-log-moment}
\end{equation}
which is preferable to a Monte Carlo approximation when a stable numerical derivative of the Bessel function is available.

\subsection{Conditional simulation of the latent hierarchy}
\label{sec:latent-simulation}

The latent states are generated from the observation layer toward the deepest layer. For fixed $\bm s$ and $W_{j\bm s}=w$, Proposition~\ref{prop:path-collapse} gives the prior marginal
\[
\bm z_j^{(l)}\mid (\bm S_j=\bm s,W_{j\bm s}=w)
\sim
\mathcal N_{r_l}\!\left(
\bm\mu_{\bm s}^{(l)}+w\bm\alpha_{\bm s}^{(l)},
w\bm\Sigma_{\bm s}^{(l)}
\right).
\]
Combining this distribution with the transition for $\bm z_j^{(l-1)}$ gives
\begin{equation}
\bm z_j^{(l)}\mid
(\bm z_j^{(l-1)},\bm S_j=\bm s,W_{j\bm s}=w)
\sim
\mathcal N_{r_l}\!\left(
\bm\rho_{j\bm s}^{(l)}(w),
w\bm\Omega_{\bm s}^{(l)}
\right),
\label{eq:latent-cond-new}
\end{equation}
where
\begin{equation*}
\bm\Omega_{\bm s}^{(l)}
=
\left[
(\bm\Sigma_{\bm s}^{(l)})^{-1}
+
\bm\Lambda_{s_l}^{(l)\top}(\bm\Psi_{s_l}^{(l)})^{-1}\bm\Lambda_{s_l}^{(l)}
\right]^{-1},
\end{equation*}
and
\begin{align*}
\bm\rho_{j\bm s}^{(l)}(w)
=\bm\Omega_{\bm s}^{(l)}\Big[&
(\bm\Sigma_{\bm s}^{(l)})^{-1}
\{\bm\mu_{\bm s}^{(l)}+w\bm\alpha_{\bm s}^{(l)}\}\\
&+
\bm\Lambda_{s_l}^{(l)\top}(\bm\Psi_{s_l}^{(l)})^{-1}
\{\bm z_j^{(l-1)}-\bm\eta_{s_l}^{(l)}-w\bm\delta_{s_l}^{(l)}\}
\Big].
\end{align*}
For replicate $m$, set $\bm z_{j\bm s,m}^{(0,t)}=\bm y_j$ and use the same draw $W_{j\bm s,m}^{(t)}$ while sampling successively from $l=1$ to $h$. Reusing the scale draw is essential: replacing it by independent layer-specific draws would define a different model and would destroy the exact pathway GHST marginal in \eqref{eq:path-ghst-new}.

\subsection{Regression-form updates for local parameters}
\label{sec:regression-updates}

Conditional on the augmented variables, each transition is a multivariate Gaussian regression with heteroscedastic weight $W^{-1}$. For $l\in\mathcal A$, define
\begin{equation*}
\widetilde{\bm x}_{j\bm s,m}^{(l,t)}
=
\begin{pmatrix}
1\\
\bm z_{j\bm s,m}^{(l,t)}\\
W_{j\bm s,m}^{(t)}
\end{pmatrix},
\qquad
\bm B_a^{(l)}=
\begin{pmatrix}
\bm\eta_a^{(l)} & \bm\Lambda_a^{(l)} & \bm\delta_a^{(l)}
\end{pmatrix}.
\end{equation*}
For $l\notin\mathcal A$, omit the final coordinate and the skewness column. Define the Monte Carlo cross-products
\begin{align}
\widehat{\bm R}_{j\bm s}^{(l,t)}
&=
\frac1M\sum_{m=1}^M
\frac{1}{W_{j\bm s,m}^{(t)}}
\bm z_{j\bm s,m}^{(l-1,t)}
\widetilde{\bm x}_{j\bm s,m}^{(l,t)\top},
\label{eq:Rjs-new}\\
\widehat{\bm Q}_{j\bm s}^{(l,t)}
&=
\frac1M\sum_{m=1}^M
\frac{1}{W_{j\bm s,m}^{(t)}}
\widetilde{\bm x}_{j\bm s,m}^{(l,t)}
\widetilde{\bm x}_{j\bm s,m}^{(l,t)\top}.
\label{eq:Qjs-new}
\end{align}
Pooling over all paths that use local component $a$ gives
\begin{align*}
\bm R_a^{(l,t)}
&=\sum_{j=1}^n\sum_{\bm s:s_l=a}\tau_{j\bm s}^{(t)}\widehat{\bm R}_{j\bm s}^{(l,t)},\\
\bm Q_a^{(l,t)}
&=\sum_{j=1}^n\sum_{\bm s:s_l=a}\tau_{j\bm s}^{(t)}\widehat{\bm Q}_{j\bm s}^{(l,t)}.
\end{align*}
The coefficient update is therefore
\begin{equation}
\bm B_a^{(l,t+1)}
=
\bm R_a^{(l,t)}\{\bm Q_a^{(l,t)}\}^{-1}.
\label{eq:B-update-new}
\end{equation}
This single regression step updates the local intercept and loading matrix, and, on active layers, the local skewness vector. For additional numerical stabilisation, a generalised inverse or a small ridge correction can be used for $\bm Q_a^{(l,t)}$.

The layer mixing proportions are updated from marginalised pathway mass:
\begin{equation}
\pi_a^{(l,t+1)}
=
\frac1n\sum_{j=1}^n\sum_{\bm s:s_l=a}\tau_{j\bm s}^{(t)}.
\label{eq:pi-update-new}
\end{equation}
Let
\[
N_a^{(l,t)}=\sum_{j=1}^n\sum_{\bm s:s_l=a}\tau_{j\bm s}^{(t)}.
\]
With
\[
\bm e_{j\bm s,m}^{(l,t+1)}
=
\bm z_{j\bm s,m}^{(l-1,t)}
-
\bm B_a^{(l,t+1)}\widetilde{\bm x}_{j\bm s,m}^{(l,t)},
\]
the unrestricted specific covariance update is
\begin{equation}
\bm\Psi_a^{(l,t+1)}
=
\frac{1}{N_a^{(l,t)}}
\sum_{j=1}^n\sum_{\bm s:s_l=a}\tau_{j\bm s}^{(t)}
\frac1M\sum_{m=1}^M
\frac{\bm e_{j\bm s,m}^{(l,t+1)}\bm e_{j\bm s,m}^{(l,t+1)\top}}
{W_{j\bm s,m}^{(t)}}.
\label{eq:Psi-update-new}
\end{equation}
The baseline diagonal model replaces the right-hand side by its diagonal part and imposes a small positive lower bound on the diagonal entries.

\subsection{Updating the degrees of freedom}
\label{sec:nu-update}

Let $N_{\bm s}^{(t)}=\sum_j\tau_{j\bm s}^{(t)}$. Differentiating the inverse-gamma contribution in \eqref{eq:complete-loglik-new} yields the scalar equation
\begin{equation}
\log\!\left(\frac{\nu_{\bm s}}2\right)
-\psi\!\left(\frac{\nu_{\bm s}}2\right)
+1
-
\frac{1}{N_{\bm s}^{(t)}}
\sum_{j=1}^n\tau_{j\bm s}^{(t)}
\left\{
\widehat{\ell}_{j\bm s}^{(t)}+\widehat{b}_{j\bm s}^{(t)}
\right\}
=0,
\label{eq:nu-update-new}
\end{equation}
where $\psi(\cdot)$ denotes the digamma function and
\[
\widehat{\ell}_{j\bm s}^{(t)}
=\E\{\log W_{j\bm s}\mid (\bm y_j,\bm S_j=\bm s;\bm\Theta^{(t)})\},
\qquad
\widehat{b}_{j\bm s}^{(t)}
=\E\{W_{j\bm s}^{-1}\mid (\bm y_j,\bm S_j=\bm s;\bm\Theta^{(t)})\}.
\]
The expectations can be evaluated from \eqref{eq:gig-moment-new}--\eqref{eq:gig-log-moment}, thereby avoiding unnecessary Monte Carlo noise in the degrees-of-freedom step. We solve \eqref{eq:nu-update-new} by a safeguarded one-dimensional root finder on $\nu_{\min}\leq\nu_{\bm s}\leq\nu_{\max}$. The lower bound may be set just above two when only the density is of interest, or above four when finite covariance under nonzero skewness is required. When additional pooling across pathways is desirable, a common $\nu$, a first-layer-specific $\nu_{s_1}$, or another pooled structure can be used.

\subsection{Fitting workflow}
\label{sec:fitting-workflow}

Algorithm~\ref{alg:dstmm} summarises one iteration. A fitted DGMM or RDMM provides a useful starting point; in particular, an RDMM warm start makes it possible to introduce skewness gradually from a well-fitted symmetric model.

\begin{algorithm}[H]
\caption{One stochastic/Monte Carlo EM iteration for the DStMM}
\label{alg:dstmm}
\begin{algorithmic}[1]
\Require Current parameters $\bm\Theta^{(t)}$ and Monte Carlo size $M$.
\State Collapse every path using \eqref{eq:mu-rec-new}--\eqref{eq:sigma-rec-new} to obtain $\bm\mu_{\bm s}^{(t)}$, $\bm\alpha_{\bm s}^{(t)}$ and $\bm\Sigma_{\bm s}^{(t)}$.
\State Evaluate $\tau_{j\bm s}^{(t)}$ from the observed GHST mixture on the log scale.
\For{each observation $j$ and path $\bm s$}
    \State Draw $W_{j\bm s,1:M}^{(t)}$ from \eqref{eq:W-draw-new}.
    \State For every scale draw, sample $\bm z^{(1)},\ldots,\bm z^{(h)}$ sequentially from \eqref{eq:latent-cond-new} using that same $W$ throughout the path.
\EndFor
\State Form the weighted cross-products in \eqref{eq:Rjs-new}--\eqref{eq:Qjs-new}.
\State Update $\bm\pi$, $\bm\eta$, $\bm\Lambda$, $\bm\delta$ and $\bm\Psi$ using \eqref{eq:B-update-new}, \eqref{eq:pi-update-new} and \eqref{eq:Psi-update-new}.
\State Update the degrees of freedom by solving \eqref{eq:nu-update-new}.
\State Impose the diagonal specific-covariance update and its positive lower bound; evaluate \eqref{eq:observed-loglik-new}.
\end{algorithmic}
\end{algorithm}

When $M=1$, the procedure is a stochastic-EM approximation rather than a deterministic EM algorithm, so monotone increase of the observed log-likelihood at every iteration is not guaranteed. We therefore monitor a moving-window summary of \eqref{eq:observed-loglik-new} and retain covariance lower bounds after every update; the simulation settings used in this paper are reported explicitly in Section~\ref{sec:sim-common}. A gradually increasing Monte Carlo size or $M>1$ can be used when a smoother Monte Carlo approximation is required. Multiple starts or warm starts from the nested RDMM provide broader exploration of the multimodal deep-mixture likelihood, and the retained fit should be chosen by the largest observed-data log-likelihood rather than by external class labels. In larger architectures, paths with persistently small posterior mass can be pruned, and the active-set formulation provides a direct way to match skewness complexity to the available information.

\subsection{Computational scaling and likelihood-based model selection}
\label{sec:bic-complexity}

Let $|\mathcal S|=\prod_{l=1}^hK_l$ denote the number of complete paths. With direct dense pathway calculations, forming and factorising the collapsed $p\times p$ covariance matrices costs $O(|\mathcal S|p^3)$ per parameter update when those factors must be refreshed, after which likelihood evaluation over all observations costs $O(n|\mathcal S|p^2)$. Conditional latent-state simulation adds approximately
$O\{Mn|\mathcal S|\sum_{l=1}^h r_{l-1}r_l\}$ matrix--vector work once the path/layer conditional covariance factors have been cached. These expressions explain the principal computational limitation of increasing depth: even though local parameters and prior weights are shared layer-wise, exact posterior allocation enumerates the complete paths. Pathwise density evaluations and conditional simulations are independent conditional on the current parameters and can therefore be parallelised.

For a candidate architecture $\mathcal M$, the implementation-based BIC criterion used in this paper is
\begin{equation*}
\operatorname{BIC}(\mathcal M)
=-2\ell(\widehat{\bm\Theta}_{\mathcal M})+d_{\mathrm{code}}(\mathcal M)\log n,
\end{equation*}
so smaller values are preferred. Here $d_{\mathrm{code}}(\mathcal M)$ is the parameter count returned by the fitting implementation. We use this criterion consistently for within-family architecture comparison; it should be interpreted as the software BIC convention rather than as a claim that $d_{\mathrm{code}}(\mathcal M)$ is a separately gauge-quotiented model-manifold dimension. For DStMM, write $r_0=p$. At layer $l$, the code counts $K_l-1$ free mixing proportions, $K_l r_{l-1}$ intercept coefficients, $K_l r_{l-1}$ diagonal specific-variance parameters, and the loading contribution in \eqref{eq:loading-ident-new}. If layer $l$ belongs to the skewness active set, it also counts $K_l r_{l-1}$ skewness coefficients. Hence the implemented DStMM count is
\begin{equation}
\begin{split}
 d_{\mathrm{code}}(\mathcal M)
 ={}&\sum_{l=1}^{h}
 \Bigg[
 (K_l-1)
 +K_l\left\{r_{l-1}r_l+2r_{l-1}
 -\frac{r_l(r_l-1)}{2}\right\}
 \\[2pt]
 &\hspace{43mm}
 +\mathbb I(l\in\mathcal A)K_l r_{l-1}
 \Bigg]
 +d_\nu .
\end{split}
\label{eq:bic-df-new}
\end{equation}
The degrees-of-freedom contribution is exactly the one selected by the software option: $d_\nu=1$ for a common $\nu$, $d_\nu=K_1$ for a first-layer-specific $\nu$, and $d_\nu=\prod_{l=1}^hK_l$ for pathway-specific degrees of freedom. In the baseline empirical DStMM fits, $\mathcal A=\{1\}$. No additional subtraction of $2\sum_{l=1}^{h-1}r_l$ is made in the current code.

The term $r_l(r_l-1)/2$ in \eqref{eq:bic-df-new} is best interpreted as the standard rotational non-identifiability correction adopted in the implementation-based parameter count for each local loading block. It should not be read as evidence that the stored fitted matrices have been explicitly rotated to satisfy a condition such as $(\bm\Lambda_a^{(l)})^\top(\bm\Psi_a^{(l)})^{-1}\bm\Lambda_a^{(l)}$ being diagonal. The M-step stores the weighted-regression loading estimate directly, while the covariance update explicitly retains only the diagonal residual variances. Thus the likelihood calculation uses the fitted parameter matrices as stored, whereas the BIC uses the effective count in \eqref{eq:bic-df-new}.

For the numerical model-selection results, we use the BIC returned by the corresponding fitting implementation for each model family. Equation~\eqref{eq:bic-df-new} records the DStMM convention explicitly and is used consistently across all DStMM candidate architectures.

\FloatBarrier
\section{Numerical simulations}
\label{sec:simulations}

The experiments are organised around a nesting question rather than around a direct repetition of the RDMM robustness study. The first experiment starts exactly from a symmetric Student-$t$ deep mixture and gradually introduces directional asymmetry. The second crosses several tail-weight levels with several skewness levels. This design distinguishes three regimes: a Gaussian model is sufficient, a symmetric heavy-tailed model is sufficient, or a genuinely skewed heavy-tailed pathway model is needed.

\subsection{Shared architecture and controlled skewness mechanism}
\label{sec:sim-common}

All data sets are generated from a two-layer hierarchy with
\begin{equation*}
p=20,
\qquad
(r_1,r_2)=(5,2),
\qquad
(K_1,K_2)=(2,2),
\end{equation*}
so there are four complete pathways. The layer mixing proportions are balanced,
\[
\bm\pi^{(1)}=\bm\pi^{(2)}=(0.5,0.5)^\top.
\]
To keep the symmetric baseline comparable with the RDMM study of \citet{wu2026rdmm}, we retain its location, loading, and specific-covariance configuration. Let $\bm e_q^{(d)}$ denote the $q$th standard basis vector in $\mathbb R^d$. The unadjusted first-layer locations are
\[
\bm\eta_{1,0}^{(1)}=-2.5\sum_{q=1}^3\bm e_q^{(20)},
\qquad
\bm\eta_{2,0}^{(1)}=2.5\sum_{q=1}^3\bm e_q^{(20)},
\]
and the second-layer locations are
\[
\bm\eta_1^{(2)}=1.25\bm e_1^{(5)},
\qquad
\bm\eta_2^{(2)}=-1.25\bm e_1^{(5)}.
\]
Let
\[
\bm a=(1.0,0.9,1.1,0.8)^\top,
\qquad
\bm A=\bm I_5\otimes\bm a,
\]
where $\otimes$ denotes the Kronecker product, and
\[
\bm D_2=\operatorname{diag}(1.10,0.90,1.05,0.95,1.00),
\]
with
\[
\bm\Lambda_1^{(1)}=\bm A,
\qquad
\bm\Lambda_2^{(1)}=\bm A\bm D_2,
\]
and
\[
\bm\Lambda_1^{(2)}=
\begin{pmatrix}
1.0&0.0\\
0.8&0.2\\
0.0&1.0\\
0.2&0.8\\
0.6&-0.6
\end{pmatrix},
\qquad
\bm\Lambda_2^{(2)}=
\begin{pmatrix}
0.9&0.1\\
0.7&-0.2\\
0.1&0.9\\
-0.2&0.7\\
0.5&0.5
\end{pmatrix}.
\]
The specific covariance matrices are
\[
\bm\Psi_a^{(1)}=0.5\bm I_{20},\qquad a=1,2,
\qquad
\bm\Psi_b^{(2)}=0.3\bm I_5,\qquad b=1,2.
\]

The data-generating model uses the baseline active set $\mathcal A=\{1\}$. We introduce skewness along the unit vector
\begin{equation*}
\bm v=\frac12(1,1,1,1,0,\ldots,0)^\top\in\mathbb R^{20}
\end{equation*}
and set
\begin{equation*}
\bm\delta_1^{(1)}=\bm\delta_2^{(1)}=\kappa\bm v,
\qquad
\bm\delta_a^{(2)}=\bm0,
\end{equation*}
where $\kappa\geq0$ controls the magnitude of directional skewness. Using the same direction for both first-layer components changes their shapes without intentionally widening the separation between their centres. Since $\E(W)=\nu/(\nu-2)$, the first-layer locations are centred according to
\begin{equation*}
\bm\eta_a^{(1)}(\nu,\kappa)
=
\bm\eta_{a,0}^{(1)}
-
\frac{\nu}{\nu-2}\kappa\bm v,
\qquad a=1,2.
\end{equation*}
Consequently, changing $\kappa$ alters asymmetry while leaving the unconditional first-layer centre fixed. Each generated observation first receives a complete path, and then one shared variable
\[
W_j\sim\IG\!\left(\frac\nu2,\frac\nu2\right),
\]
which is reused in both latent transitions and in the observation equation.

Although the general formulation permits pathway-specific degrees of freedom $\nu_{\bm s}$, the simulation study uses a common $\nu$ across pathways to isolate the effects of tail weight and directional asymmetry and to keep the comparison across models focused on component shape. We fit DGMM, RDMM, and DStMM under the same known architecture. In the full runs, the maximum number of iterations is 300, the numerical tolerance is $10^{-4}$, the minimum iteration count is 100, and the moving convergence window is 20. The DStMM uses Monte Carlo size $M=1$; RDMM and DStMM use a diagonal covariance floor of $10^{-6}$, and their common degrees of freedom are constrained to $[2.01,200]$. Fits are performed in the order DGMM, RDMM, and DStMM, with the DStMM warm-started from the corresponding RDMM fit whenever available. Because the data-generating architecture is fixed and known, architecture selection is not mixed into the comparison of component shapes.

\subsection{Numerical simulation 1}
\label{sec:sim1}

The first experiment fixes $\nu=6$ and varies
\begin{equation*}
\kappa\in\{0,0.5,1.0,1.5,2.0\}.
\end{equation*}
For every value of $\kappa$, the main design uses $n=1000$ observations and $R=1000$ independent Monte Carlo replications. Two robustness checks repeat the full grid with $n=500$ and with moderately unbalanced layer proportions $\bm\pi^{(1)}=\bm\pi^{(2)}=(0.7,0.3)^\top$, respectively. The reported summaries are the first-layer and complete-pathway adjusted Rand index (ARI), the corresponding misclassification rates (MR), recovery of the common degrees of freedom, and recovery of the first-layer skewness vectors. The latter is measured by
\begin{equation*}
\operatorname{RMSE}_{\delta}
=
\left[
\frac{1}{2p}
\sum_{a=1}^2
\|\widehat{\bm\delta}_a^{(1)}-\bm\delta_a^{(1)}\|^2
\right]^{1/2}.
\end{equation*}

\begin{table}[!ht]
\centering
\caption{Experiment 1 complete-pathway ARI over 1000 Monte Carlo replications. Entries are mean (standard deviation).}
\label{tab:exp1-path-ari-main}
\setlength{\tabcolsep}{3.5pt}
\begin{tabular}{lccccc}
\toprule
& \multicolumn{5}{c}{$\kappa$} \\
\cmidrule(lr){2-6}
Model & 0 & 0.5 & 1.0 & 1.5 & 2.0 \\
\midrule
DGMM & 0.489 (0.102) & 0.456 (0.101) & 0.363 (0.123) & 0.233 (0.140) & 0.132 (0.126) \\
RDMM & 0.767 (0.061) & 0.748 (0.074) & 0.708 (0.082) & 0.656 (0.093) & 0.601 (0.097) \\
DStMM & 0.770 (0.054) & 0.761 (0.067) & 0.742 (0.079) & 0.722 (0.091) & 0.695 (0.102) \\
\bottomrule
\end{tabular}
\end{table}

\begin{figure}[!ht]
\centering
\begin{subfigure}[t]{0.65\linewidth}
    \centering
    \includegraphics[width=\linewidth]{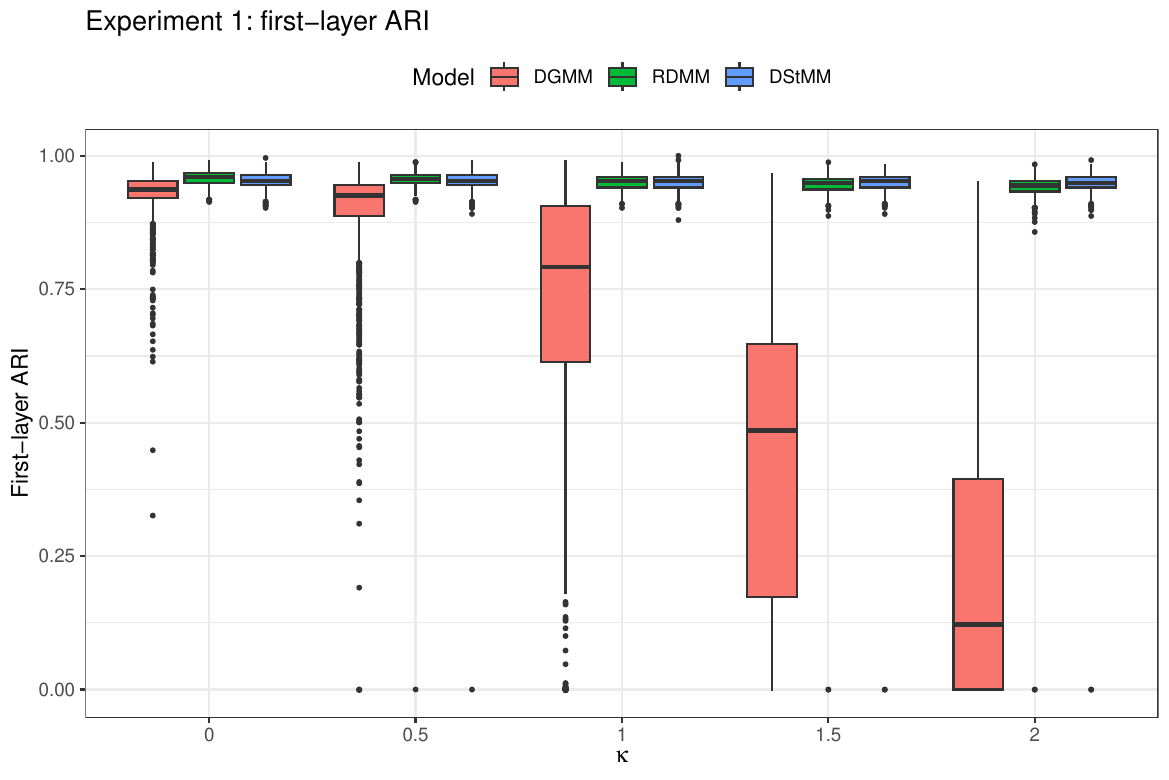}
    \caption{First-layer ARI.}
    \label{fig:exp1-l1-ari}
\end{subfigure}

\medskip
\begin{subfigure}[t]{0.65\linewidth}
    \centering
    \includegraphics[width=\linewidth]{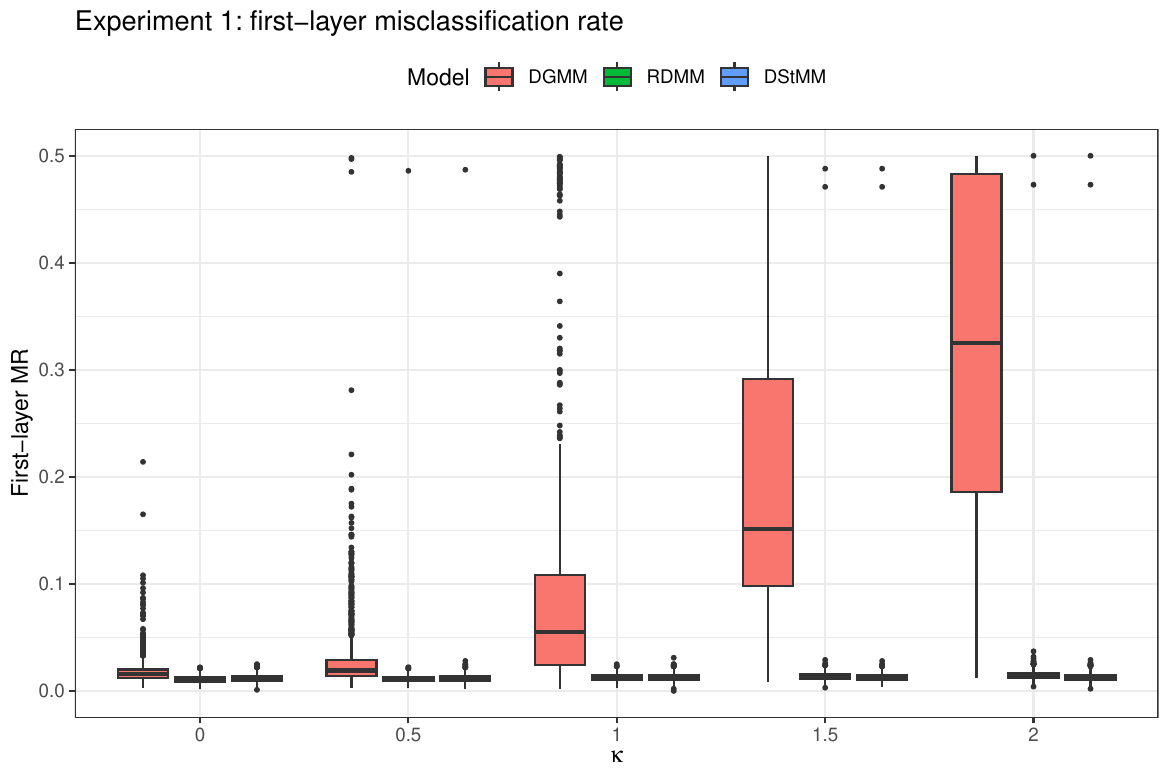}
    \caption{First-layer MR.}
    \label{fig:exp1-l1-mr}
\end{subfigure}
\caption{Experiment 1 first-layer clustering performance over 1000 Monte Carlo replications as directional skewness increases.}
\label{fig:exp1-first-layer-composite}
\end{figure}

\begin{figure}[!ht]
\centering
\begin{subfigure}[t]{0.65\linewidth}
    \centering
    \includegraphics[width=\linewidth]{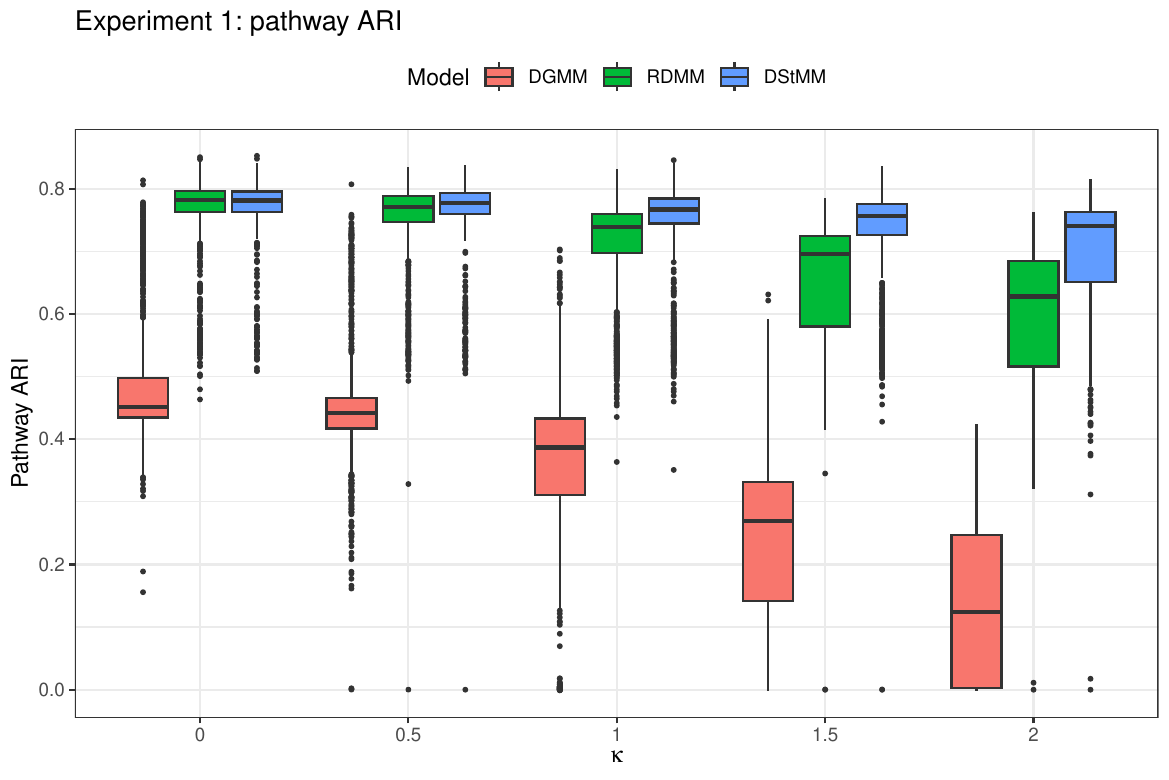}
    \caption{Complete-pathway ARI.}
    \label{fig:exp1-path-ari}
\end{subfigure}

\medskip
\begin{subfigure}[t]{0.65\linewidth}
    \centering
    \includegraphics[width=\linewidth]{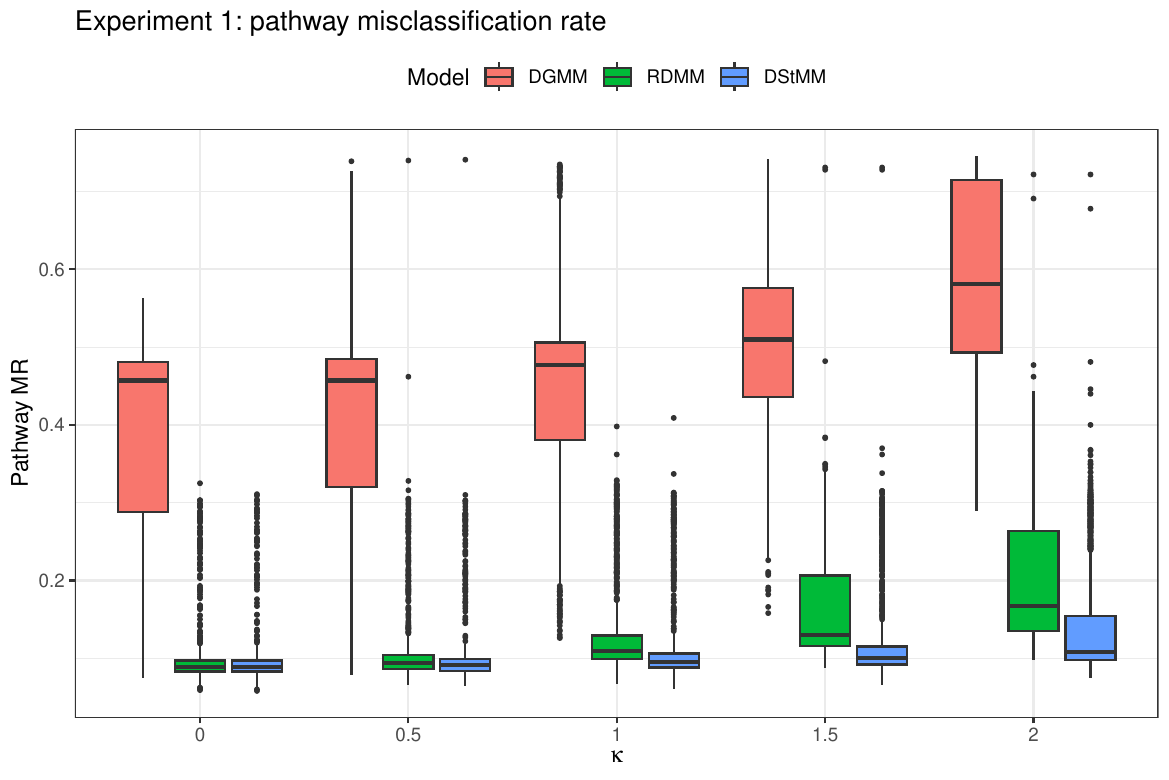}
    \caption{Complete-pathway MR.}
    \label{fig:exp1-path-mr}
\end{subfigure}
\caption{Experiment 1 complete-pathway clustering performance over 1000 Monte Carlo replications as directional skewness increases.}
\label{fig:exp1-pathway-composite}
\end{figure}

Table~\ref{tab:exp1-path-ari-main} and Figures~\ref{fig:exp1-first-layer-composite} and~\ref{fig:exp1-pathway-composite} establish the main nesting result. Under symmetry ($\kappa=0$), RDMM and DStMM recover the complete pathways almost equally well, with mean ARIs of 0.767 and 0.770. Allowing skewness therefore preserves the strong clustering performance of the symmetric submodel. The same conclusion is visible at the first layer, where both heavy-tailed models achieve ARI above 0.95 and MR close to 0.01.

The difference between RDMM and DStMM emerges progressively as the skewness signal strengthens. The DStMM-minus-RDMM pathway-ARI gain is 0.013 at $\kappa=0.5$, 0.034 at $\kappa=1$, 0.066 at $\kappa=1.5$, and 0.094 at $\kappa=2$. The corresponding pathway MR at $\kappa=2$ decreases from 0.201 under RDMM to 0.145 under DStMM. In contrast, their first-layer classifications remain very similar: at $\kappa=2$, the first-layer ARIs are 0.941 for RDMM and 0.947 for DStMM. This separation between first-layer and pathway performance is important. It shows that the additional flexibility of DStMM is expressed most clearly in deeper pathway allocation, while the coarse top-level partition remains highly stable. The skewness term therefore contributes most strongly where the hierarchy distinguishes substructure within the broad first-layer groups.

DGMM behaves differently because it lacks both random scale mixing and directional mean--variance coupling. Its mean pathway ARI falls from 0.489 at $\kappa=0$ to 0.132 at $\kappa=2$, while its first-layer ARI falls from 0.926 to 0.210. Thus, even at $\kappa=0$, where skewness itself is absent, the heavy-tailed setting highlights the value of the shared-scale construction. The comparison among the three models therefore separates two effects: RDMM supplies heavy-tail robustness through the shared scale variable, whereas DStMM adds the further improvement associated with scale-dependent directional displacement.

The parameter estimates support the same interpretation. Across the Experiment 1 grid, DStMM has degrees-of-freedom bias between 0.024 and 0.062 in magnitude and skewness RMSE between 0.096 and 0.106. By comparison, the RDMM bias in $\widehat\nu$ becomes increasingly negative as $\kappa$ grows, from $-0.032$ at symmetry to $-0.190$ at $\kappa=2$. A natural interpretation is that the symmetric RDMM compensates for unmodelled asymmetry by estimating heavier tails, whereas DStMM can allocate the two features to separate parameters. Table~\ref{tab:exp1-details} gives the complete clustering and recovery summaries, and Figure~\ref{fig:exp1-recovery-composite} shows the corresponding recovery distributions.

\begin{table}[!ht]
\centering
\caption{Experiment 1 detailed numerical results: (a) clustering performance and (b) recovery of the common degrees of freedom and first-layer skewness vectors.}
\label{tab:exp1-details}
\begin{subtable}[t]{\linewidth}
\centering
\caption{Clustering performance. Entries are Monte Carlo mean (standard deviation) over successful fits.}
\label{tab:exp1-details-clustering}
\setlength{\tabcolsep}{3pt}
\begin{tabular}{clrrrr}
\toprule
$\kappa$ & Model & L1 ARI & L1 MR & Path ARI & Path MR \\
\midrule
0 & DGMM & 0.926 (0.052) & 0.019 (0.015) & 0.489 (0.102) & 0.378 (0.134) \\
0 & DStMM & 0.953 (0.014) & 0.012 (0.004) & 0.770 (0.054) & 0.097 (0.037) \\
0 & RDMM & 0.958 (0.013) & 0.011 (0.003) & 0.767 (0.061) & 0.100 (0.042) \\
0.5 & DGMM & 0.886 (0.115) & 0.031 (0.040) & 0.456 (0.101) & 0.395 (0.123) \\
0.5 & DStMM & 0.952 (0.033) & 0.012 (0.015) & 0.761 (0.067) & 0.103 (0.048) \\
0.5 & RDMM & 0.955 (0.033) & 0.012 (0.015) & 0.748 (0.074) & 0.110 (0.052) \\
1 & DGMM & 0.720 (0.238) & 0.088 (0.102) & 0.363 (0.123) & 0.445 (0.113) \\
1 & DStMM & 0.951 (0.015) & 0.012 (0.004) & 0.742 (0.079) & 0.115 (0.057) \\
1 & RDMM & 0.951 (0.014) & 0.012 (0.003) & 0.708 (0.082) & 0.133 (0.060) \\
1.5 & DGMM & 0.436 (0.291) & 0.208 (0.154) & 0.233 (0.140) & 0.520 (0.120) \\
1.5 & DStMM & 0.948 (0.045) & 0.014 (0.021) & 0.722 (0.091) & 0.127 (0.068) \\
1.5 & RDMM & 0.945 (0.045) & 0.014 (0.021) & 0.656 (0.093) & 0.164 (0.072) \\
2 & DGMM & 0.210 (0.233) & 0.327 (0.151) & 0.132 (0.126) & 0.594 (0.117) \\
2 & DStMM & 0.947 (0.045) & 0.014 (0.022) & 0.695 (0.102) & 0.145 (0.079) \\
2 & RDMM & 0.941 (0.045) & 0.015 (0.021) & 0.601 (0.097) & 0.201 (0.081) \\
\bottomrule
\end{tabular}
\end{subtable}

\medskip
\begin{subtable}[t]{\linewidth}
\centering
\caption{Parameter recovery. The final column is mean (standard deviation).}
\label{tab:exp1-details-params}
\setlength{\tabcolsep}{3pt}
\begin{tabular}{crrrrr}
\toprule
& \multicolumn{2}{c}{RDMM $\widehat\nu$} & \multicolumn{2}{c}{DStMM $\widehat\nu$} & DStMM \\
\cmidrule(lr){2-3}\cmidrule(lr){4-5}
$\kappa$ & Bias & RMSE & Bias & RMSE & $\mathrm{RMSE}_{\delta}$ \\
\midrule
0 & -0.032 & 0.326 & 0.041 & 0.334 & 0.096 (0.021) \\
0.5 & -0.048 & 0.335 & 0.040 & 0.342 & 0.098 (0.024) \\
1 & -0.067 & 0.333 & 0.062 & 0.346 & 0.099 (0.022) \\
1.5 & -0.152 & 0.353 & 0.024 & 0.327 & 0.101 (0.030) \\
2 & -0.190 & 0.375 & 0.034 & 0.338 & 0.106 (0.024) \\
\bottomrule
\end{tabular}
\end{subtable}

\end{table}

\begin{figure}[!ht]
\centering
\begin{subfigure}[t]{0.48\linewidth}
    \centering
    \includegraphics[width=\linewidth]{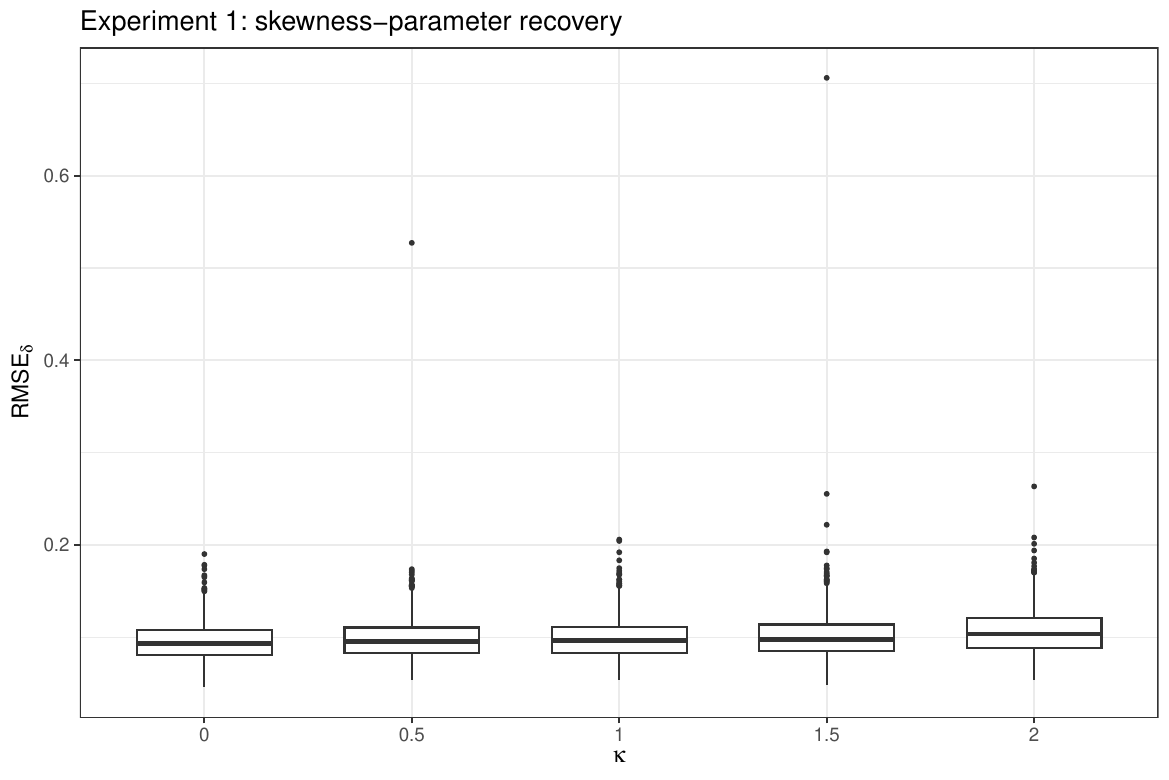}
    \caption{Skewness-vector RMSE.}
    \label{fig:exp1-delta-rmse}
\end{subfigure}
\hfill
\begin{subfigure}[t]{0.48\linewidth}
    \centering
    \includegraphics[width=\linewidth]{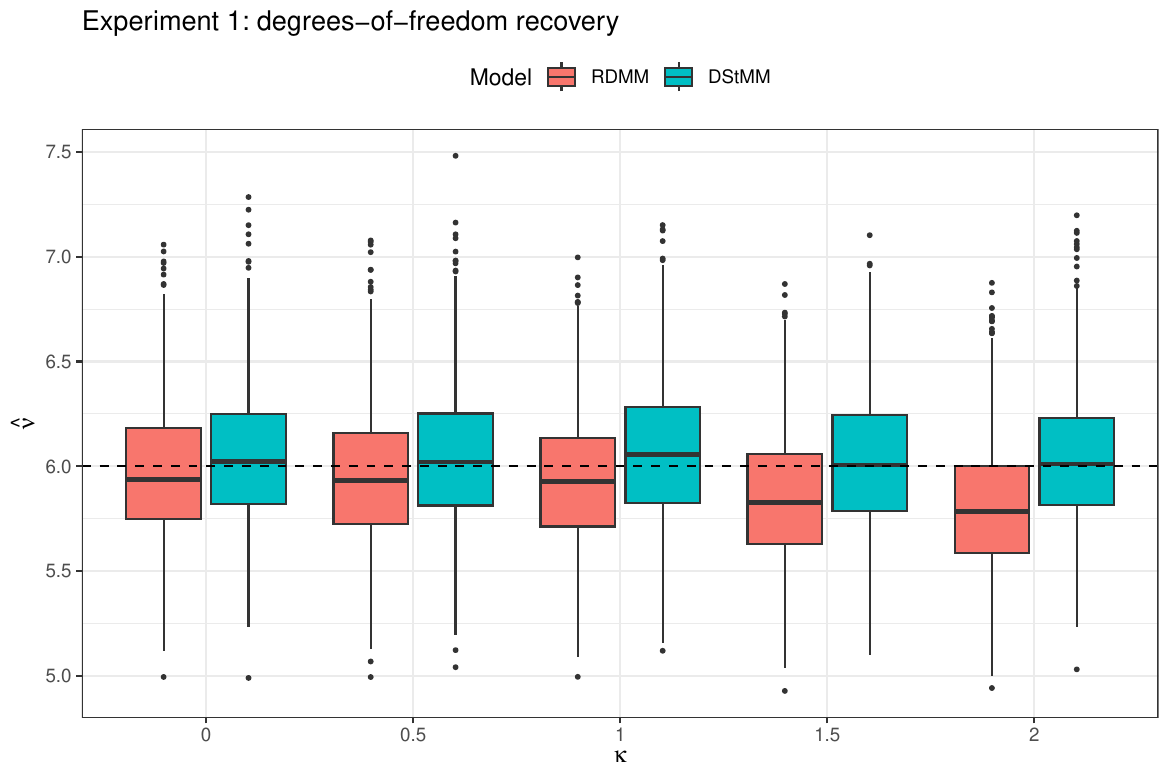}
    \caption{Recovery of $\nu$.}
    \label{fig:exp1-nu-recovery}
\end{subfigure}
\caption{Experiment 1 parameter-recovery diagnostics.}
\label{fig:exp1-recovery-composite}
\end{figure}

The full metrics in Table~\ref{tab:exp1-details}(a) confirm that ARI and MR tell the same substantive story: first-layer error remains very small for both heavy-tailed models, whereas pathway error separates increasingly as $\kappa$ grows. Table~\ref{tab:exp1-details}(b) and Figure~\ref{fig:exp1-recovery-composite} also show that DStMM estimates $\nu$ with little systematic bias over the skewness grid, while RDMM increasingly underestimates $\nu$. The skewness-vector RMSE rises only modestly from 0.096 to 0.106, indicating that the added skewness parameters are estimable at $n=1000$ and prevent the tail parameter from being forced to explain directional asymmetry.

\subsection{Numerical simulation 2}
\label{sec:sim2}

The second experiment uses the factorial grid
\begin{equation*}
\nu\in\{5,8,15\},
\qquad
\kappa\in\{0,1,2\},
\end{equation*}
with $n=1000$ and $R=1000$ in each of the nine cells. The values of $\nu$ move from pronounced heavy tails to more concentrated scale mixing, whereas the values of $\kappa$ move from symmetry to strong directional skewness. Since the same $W$ controls both variance inflation and the skewness displacement, this factorial design tests whether the benefit of modelling asymmetry depends on how variable the common latent scale is.

\begin{table}[t]
\centering
\caption{Experiment 2 mean complete-pathway ARI gain of DStMM over RDMM. Positive values favour DStMM.}
\label{tab:exp2-ari-gain-main}
\begin{tabular}{lccc}
\toprule
& \multicolumn{3}{c}{$\kappa$} \\
\cmidrule(lr){2-4}
$\nu$ & 0 & 1 & 2 \\
\midrule
5 & 0.003 & 0.045 & 0.115 \\
8 & 0.005 & 0.024 & 0.064 \\
15 & 0.003 & 0.010 & 0.028 \\
\bottomrule
\end{tabular}
\end{table}

\begin{figure}[!ht]
\centering
\begin{subfigure}[t]{0.76\linewidth}
    \centering
    \includegraphics[width=\linewidth]{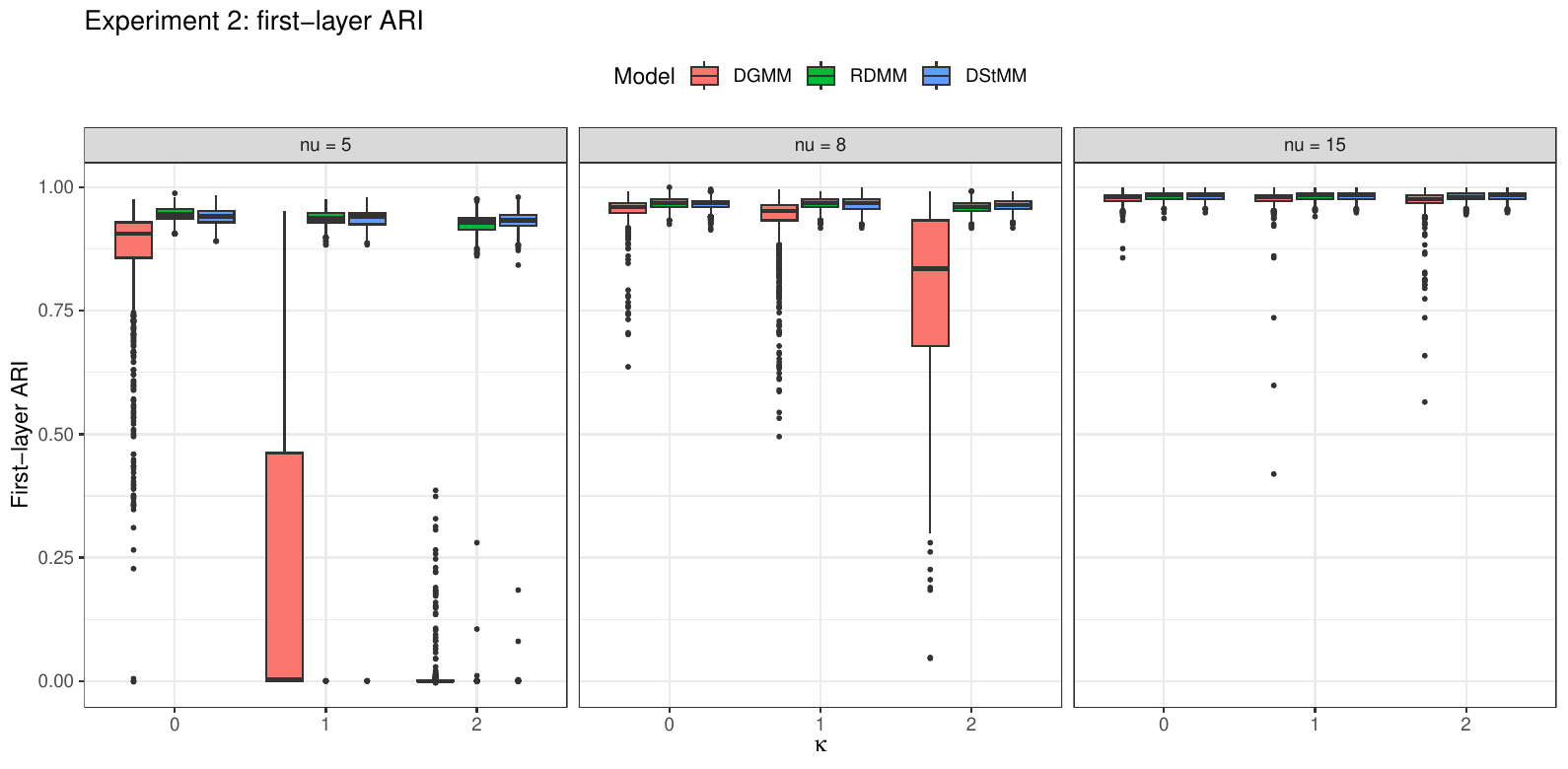}
    \caption{First-layer ARI.}
    \label{fig:exp2-l1-ari}
\end{subfigure}

\medskip
\begin{subfigure}[t]{0.76\linewidth}
    \centering
    \includegraphics[width=\linewidth]{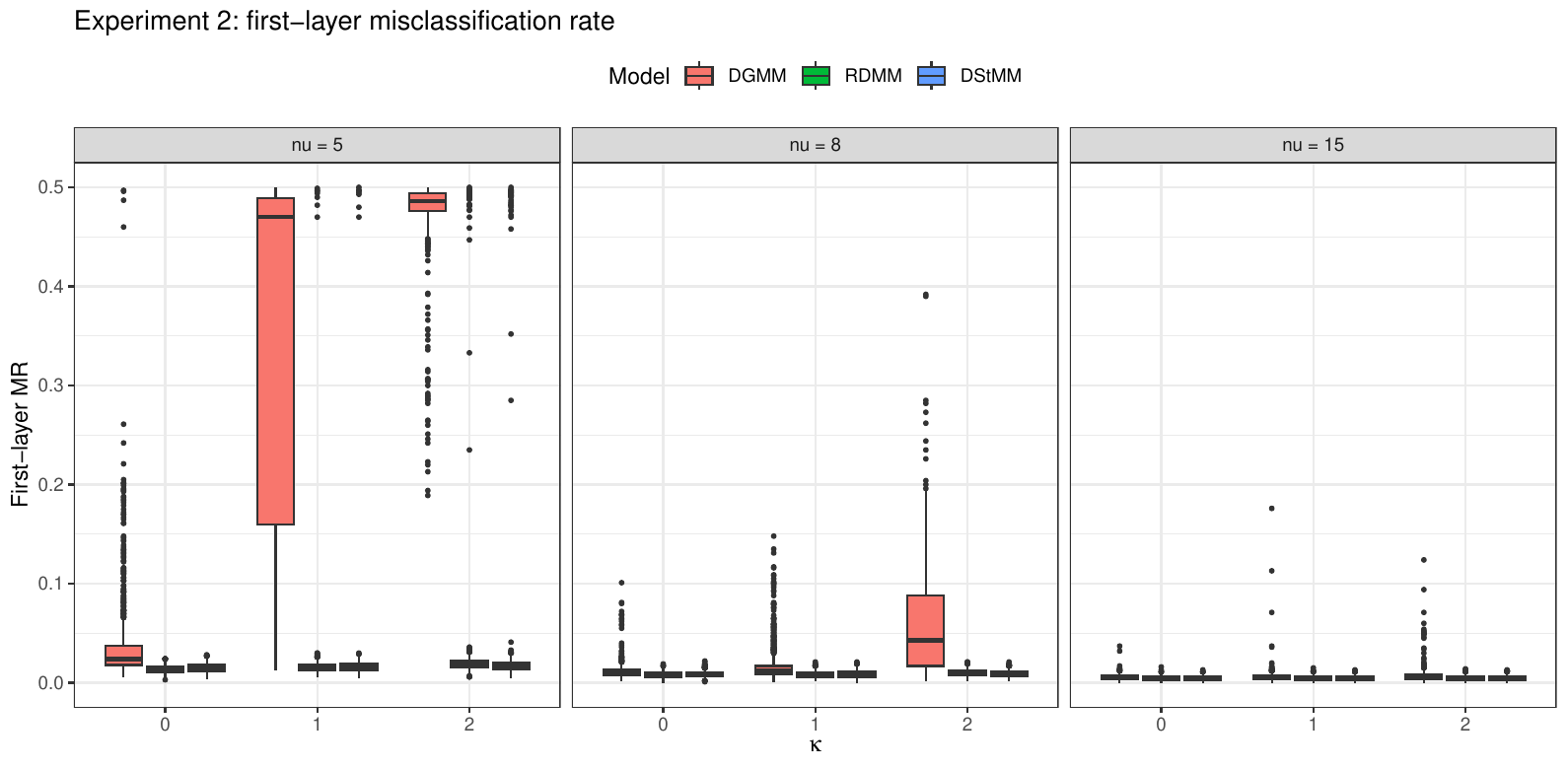}
    \caption{First-layer MR.}
    \label{fig:exp2-l1-mr}
\end{subfigure}
\caption{Experiment 2 first-layer clustering performance across the $(\nu,\kappa)$ factorial design.}
\label{fig:exp2-first-layer-composite}
\end{figure}

\begin{figure}[tbp]
\centering
\begin{subfigure}[t]{0.76\linewidth}
    \centering
    \includegraphics[width=\linewidth]{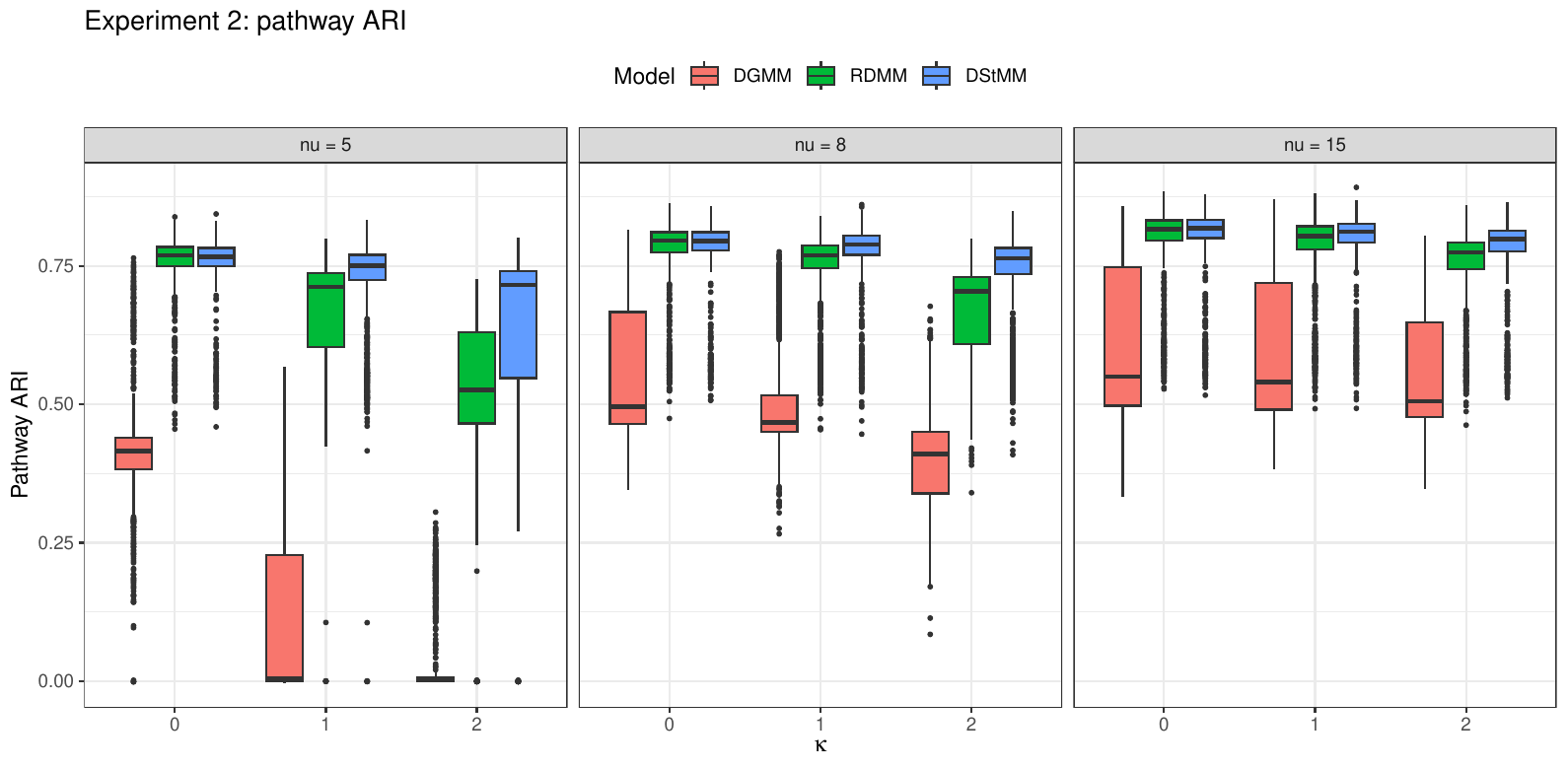}
    \caption{Complete-pathway ARI.}
    \label{fig:exp2-path-ari}
\end{subfigure}

\medskip
\begin{subfigure}[t]{0.76\linewidth}
    \centering
    \includegraphics[width=\linewidth]{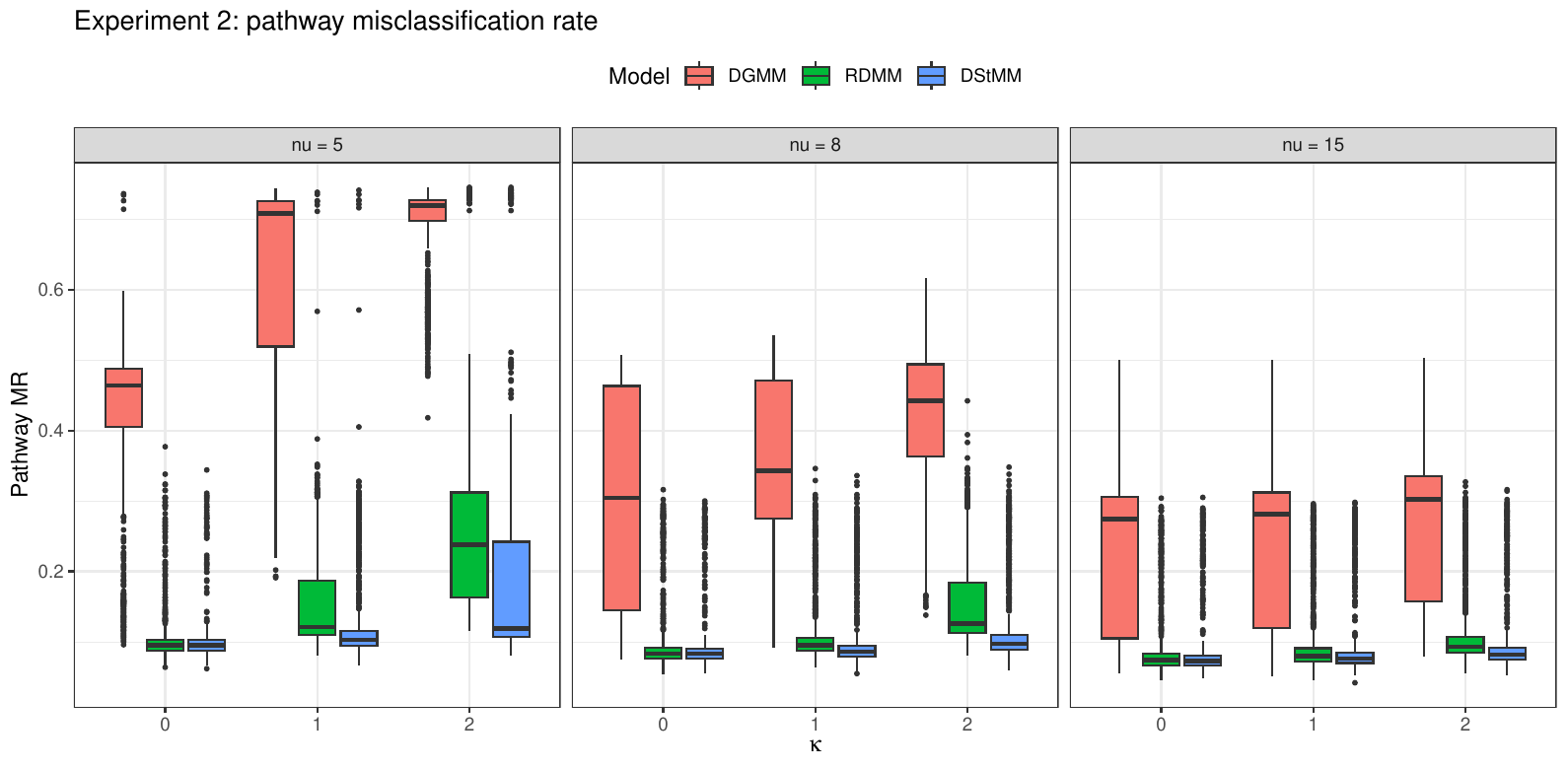}
    \caption{Complete-pathway MR.}
    \label{fig:exp2-path-mr}
\end{subfigure}
\caption{Experiment 2 complete-pathway clustering performance across the $(\nu,\kappa)$ factorial design. The RDMM--DStMM contrast is strongest when large skewness is combined with heavy tails.}
\label{fig:exp2-pathway-composite}
\end{figure}

Table~\ref{tab:exp2-ari-gain-main} and Figures~\ref{fig:exp2-first-layer-composite} and~\ref{fig:exp2-pathway-composite} show a clear interaction between tail weight and asymmetry. In the symmetry column ($\kappa=0$), the mean DStMM-minus-RDMM pathway-ARI difference is only 0.003--0.005 for all three values of $\nu$, again confirming that the extra skewness parameters preserve clustering performance when the data are symmetric. Once $\kappa>0$, however, the gain grows as the tails become heavier. At $\kappa=2$, the pathway-ARI gain is 0.115 for $\nu=5$, 0.064 for $\nu=8$, and 0.028 for $\nu=15$. The same ordering appears at $\kappa=1$. This interaction follows directly from the model construction. The skewness displacement is $W\bm\alpha$, so variation in $W$ controls not only tail inflation but also how strongly observations are displaced in the skewness direction. When $\nu$ is small, the inverse-gamma mixing distribution is more variable, and the explicit DStMM coupling between tail weight and directional displacement becomes especially useful. As $\nu$ increases, $W$ is more concentrated, reducing the effective distinction between a scale-dependent skewness displacement and a more conventional location shift. The empirical reduction in the DStMM--RDMM gap as $\nu$ increases is therefore consistent with the pathway-level mean--variance mixture that motivates the model.

The strongly heavy-tailed and asymmetric cell $(\nu,\kappa)=(5,2)$ makes the practical consequence especially clear: mean pathway ARI is 0.644 for DStMM, 0.529 for RDMM, and 0.042 for DGMM, with pathway MR equal to 0.182, 0.259, and 0.686, respectively. At the lighter-tailed setting $(15,2)$, the corresponding pathway ARIs are 0.775, 0.747, and 0.553. Meanwhile, the first-layer ARIs of RDMM and DStMM remain nearly identical across most of the factorial design. As in Experiment 1, the main value of explicitly modelling skewness is therefore seen in complete-path recovery rather than in a large change to the coarse first-layer partition. The DGMM pattern further distinguishes tail and skewness misspecification. At $\nu=15$, its first-layer ARI remains at least 0.973 even when $\kappa=2$, whereas its pathway ARI is only 0.553. At $\nu=5$, both levels deteriorate sharply with increasing skewness, reaching first-layer ARI 0.007 and pathway ARI 0.042 at $\kappa=2$. Thus, the Gaussian model can retain the broad partition when scale mixing is relatively mild, while the DStMM shows its largest advantage when heavy tails and skewness operate together. Table~\ref{tab:exp2-details} reports the complete Experiment 2 metrics and parameter-recovery summaries. The corresponding recovery distributions are shown in Figure~\ref{fig:exp2-recovery-composite} in Appendix~\ref{app:simulation-diagnostics}.

\begin{table}[!ht]
\centering
\caption{Experiment 2 detailed numerical results across the $(\nu,\kappa)$ factorial design: (a) clustering performance and (b) recovery of the common degrees of freedom and first-layer skewness vectors.}
\label{tab:exp2-details}
\begin{subtable}[t]{\linewidth}
\centering
\caption{Clustering performance. Entries are Monte Carlo mean (standard deviation) over successful fits.}
\label{tab:exp2-details-clustering}
\renewcommand{\arraystretch}{0.90}
\begin{tabular}{cclrrrr}
\toprule
$\nu$ & $\kappa$ & Model & L1 ARI & L1 MR & Path ARI & Path MR \\
\midrule
5 & 0 & DGMM & 0.863 (0.128) & 0.038 (0.045) & 0.423 (0.107) & 0.421 (0.115) \\
5 & 0 & DStMM & 0.941 (0.015) & 0.015 (0.004) & 0.758 (0.050) & 0.102 (0.035) \\
5 & 0 & RDMM & 0.946 (0.014) & 0.014 (0.004) & 0.755 (0.058) & 0.105 (0.041) \\
5 & 1 & DGMM & 0.230 (0.290) & 0.335 (0.175) & 0.119 (0.144) & 0.616 (0.131) \\
5 & 1 & DStMM & 0.931 (0.080) & 0.019 (0.040) & 0.715 (0.101) & 0.131 (0.078) \\
5 & 1 & RDMM & 0.931 (0.080) & 0.019 (0.040) & 0.670 (0.102) & 0.156 (0.079) \\
5 & 2 & DGMM & 0.007 (0.037) & 0.477 (0.039) & 0.042 (0.078) & 0.686 (0.069) \\
5 & 2 & DStMM & 0.902 (0.169) & 0.033 (0.084) & 0.644 (0.153) & 0.182 (0.130) \\
5 & 2 & RDMM & 0.895 (0.167) & 0.034 (0.084) & 0.529 (0.130) & 0.259 (0.123) \\
8 & 0 & DGMM & 0.954 (0.031) & 0.012 (0.008) & 0.553 (0.117) & 0.298 (0.140) \\
8 & 0 & DStMM & 0.966 (0.011) & 0.009 (0.003) & 0.777 (0.067) & 0.097 (0.048) \\
8 & 0 & RDMM & 0.969 (0.011) & 0.008 (0.003) & 0.773 (0.072) & 0.100 (0.051) \\
8 & 1 & DGMM & 0.932 (0.064) & 0.018 (0.018) & 0.501 (0.095) & 0.343 (0.126) \\
8 & 1 & DStMM & 0.965 (0.013) & 0.009 (0.003) & 0.770 (0.067) & 0.099 (0.046) \\
8 & 1 & RDMM & 0.966 (0.012) & 0.009 (0.003) & 0.747 (0.071) & 0.111 (0.049) \\
8 & 2 & DGMM & 0.790 (0.165) & 0.059 (0.051) & 0.401 (0.080) & 0.419 (0.092) \\
8 & 2 & DStMM & 0.964 (0.012) & 0.009 (0.003) & 0.732 (0.083) & 0.122 (0.060) \\
8 & 2 & RDMM & 0.961 (0.012) & 0.010 (0.003) & 0.668 (0.084) & 0.155 (0.062) \\
15 & 0 & DGMM & 0.979 (0.011) & 0.005 (0.003) & 0.608 (0.129) & 0.231 (0.116) \\
15 & 0 & DStMM & 0.982 (0.009) & 0.005 (0.002) & 0.793 (0.076) & 0.091 (0.053) \\
15 & 0 & RDMM & 0.983 (0.009) & 0.004 (0.002) & 0.790 (0.077) & 0.093 (0.053) \\
15 & 1 & DGMM & 0.977 (0.026) & 0.006 (0.007) & 0.592 (0.122) & 0.241 (0.114) \\
15 & 1 & DStMM & 0.982 (0.009) & 0.004 (0.002) & 0.787 (0.076) & 0.095 (0.056) \\
15 & 1 & RDMM & 0.983 (0.009) & 0.004 (0.002) & 0.777 (0.079) & 0.100 (0.056) \\
15 & 2 & DGMM & 0.973 (0.028) & 0.007 (0.008) & 0.553 (0.106) & 0.275 (0.116) \\
15 & 2 & DStMM & 0.982 (0.009) & 0.005 (0.002) & 0.775 (0.074) & 0.099 (0.053) \\
15 & 2 & RDMM & 0.981 (0.009) & 0.005 (0.002) & 0.747 (0.074) & 0.112 (0.052) \\
\bottomrule
\end{tabular}
\end{subtable}

\medskip
\begin{subtable}[t]{\linewidth}
\centering
\caption{Parameter recovery. The final column is mean (standard deviation).}
\label{tab:exp2-details-params}
\renewcommand{\arraystretch}{0.90}
\begin{tabular}{ccrrrrr}
\toprule
& & \multicolumn{2}{c}{RDMM $\widehat\nu$} & \multicolumn{2}{c}{DStMM $\widehat\nu$} & DStMM \\
\cmidrule(lr){3-4}\cmidrule(lr){5-6}
$\nu$ & $\kappa$ & Bias & RMSE & Bias & RMSE & $\mathrm{RMSE}_{\delta}$ \\
\midrule
5 & 0 & -0.039 & 0.269 & 0.014 & 0.272 & 0.082 (0.018) \\
5 & 1 & -0.094 & 0.272 & 0.010 & 0.257 & 0.087 (0.043) \\
5 & 2 & -0.135 & 0.290 & 0.042 & 0.262 & 0.105 (0.082) \\
8 & 0 & -0.040 & 0.478 & 0.092 & 0.499 & 0.121 (0.027) \\
8 & 1 & -0.098 & 0.479 & 0.093 & 0.497 & 0.123 (0.027) \\
8 & 2 & -0.193 & 0.516 & 0.116 & 0.526 & 0.131 (0.031) \\
15 & 0 & 0.007 & 1.188 & 0.627 & 1.458 & 0.199 (0.044) \\
15 & 1 & 0.039 & 1.245 & 0.707 & 1.581 & 0.201 (0.043) \\
15 & 2 & -0.232 & 1.183 & 0.534 & 1.426 & 0.205 (0.043) \\
\bottomrule
\end{tabular}
\end{subtable}
\end{table}

Table~\ref{tab:exp2-details}(a) confirms that the tail--skewness interaction is reproduced by both ARI and MR. For example, at $(\nu,\kappa)=(5,2)$ DStMM reduces pathway MR by 0.077 relative to RDMM, whereas at $(15,2)$ the reduction is 0.013; at $\kappa=0$, the two heavy-tailed models remain nearly indistinguishable for every $\nu$. The parameter-recovery summaries also follow the geometry of the model as $\nu$ varies. For DStMM, $\operatorname{RMSE}_{\delta}$ is roughly 0.08--0.11 at $\nu=5$ and about 0.20 at $\nu=15$, while the RMSE of $\widehat\nu$ changes from about 0.26--0.27 to 1.43--1.58 as the latent scale $W$ becomes increasingly concentrated near one. Across these settings, the clustering results remain strong and preserve the same tail--skewness interaction.

\subsection{Robustness checks}
\label{sec:sim-robustness}

The principal conclusions are not confined to the baseline $n=1000$ balanced design. Table~\ref{tab:exp1-robustness} summarises the smaller-sample and unbalanced-mixture checks, while Figure~\ref{fig:exp1-robustness-pathway} shows the complete-pathway distributions. The corresponding first-layer distributions are provided in Figure~\ref{fig:exp1-robustness-first-layer} in Appendix~\ref{app:simulation-diagnostics}.

\begin{table}[!ht]
\centering
\caption{Experiment 1 robustness checks: complete-pathway ARI, reported as Monte Carlo mean (standard deviation). Panel (a) reduces the sample size to $n=500$; panel (b) uses $\bm\pi^{(1)}=\bm\pi^{(2)}=(0.7,0.3)^\top$.}
\label{tab:exp1-robustness}
\begin{subtable}[t]{0.49\linewidth}
\centering
\caption{$n=500$.}
\label{tab:exp1-rob-n500}
\begin{tabular}{crrr}
\toprule
$\kappa$ & DGMM & RDMM & DStMM \\
\midrule
0 & 0.464 (0.104) & 0.731 (0.081) & 0.734 (0.074) \\
0.5 & 0.435 (0.095) & 0.711 (0.088) & 0.724 (0.080) \\
1 & 0.378 (0.102) & 0.677 (0.091) & 0.710 (0.089) \\
1.5 & 0.292 (0.126) & 0.604 (0.111) & 0.664 (0.118) \\
2 & 0.212 (0.132) & 0.543 (0.110) & 0.625 (0.127) \\
\bottomrule
\end{tabular}
\end{subtable}\hfill
\begin{subtable}[t]{0.49\linewidth}
\centering
\caption{Unbalanced layer proportions.}
\label{tab:exp1-rob-unbal}
\setlength{\tabcolsep}{1.8pt}
\begin{tabular}{crrr}
\toprule
$\kappa$ & DGMM & RDMM & DStMM \\
\midrule
0 & 0.373 (0.120) & 0.759 (0.047) & 0.759 (0.044) \\
0.5 & 0.286 (0.114) & 0.762 (0.047) & 0.768 (0.043) \\
1 & 0.240 (0.092) & 0.751 (0.057) & 0.776 (0.044) \\
1.5 & 0.216 (0.076) & 0.708 (0.088) & 0.768 (0.066) \\
2 & 0.193 (0.065) & 0.640 (0.120) & 0.758 (0.079) \\
\bottomrule
\end{tabular}
\end{subtable}

\end{table}

\begin{figure}[!ht]
\centering
\begin{subfigure}[t]{0.76\linewidth}
    \centering
    \includegraphics[width=\linewidth]{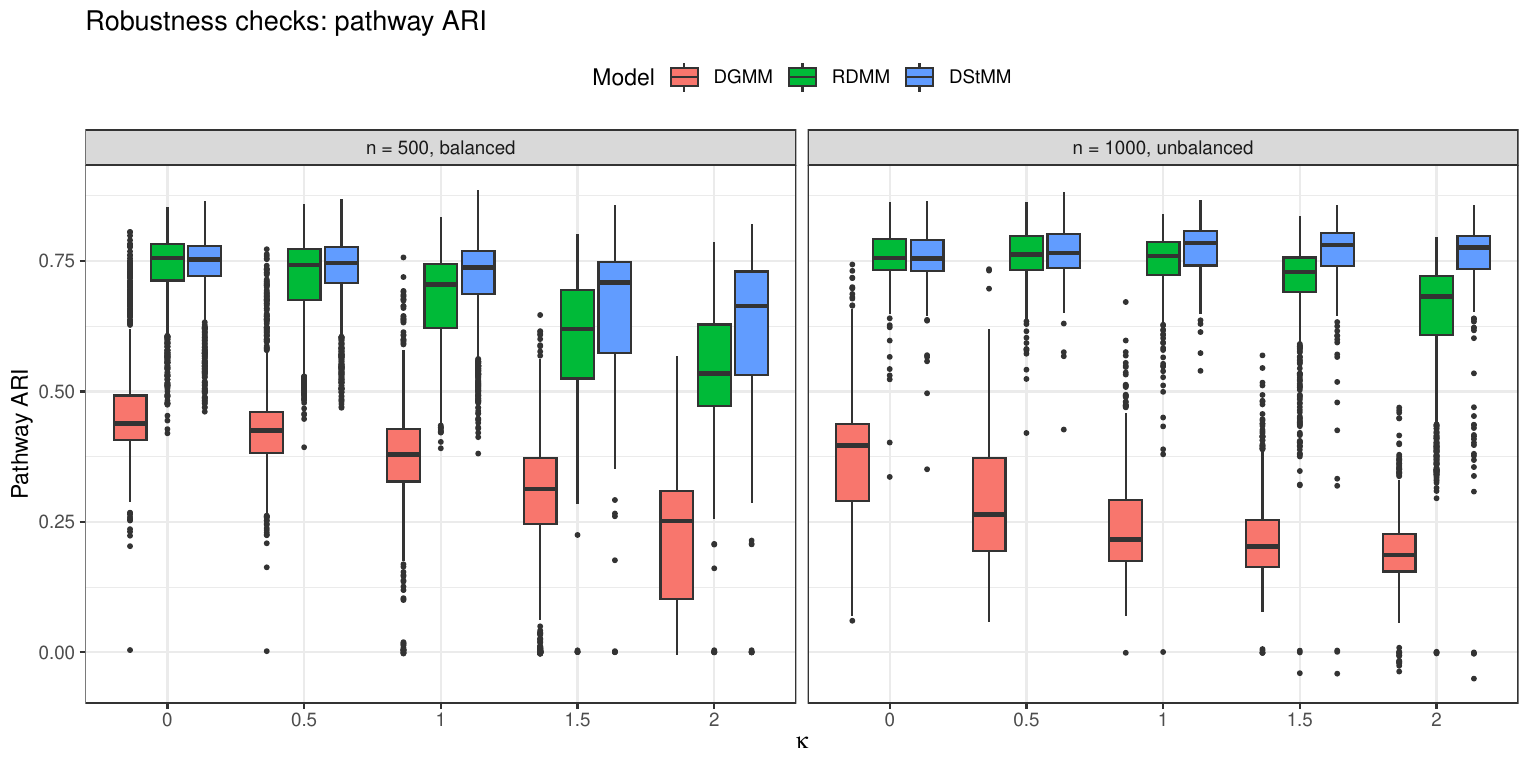}
    \caption{Complete-pathway ARI.}
    \label{fig:rob-path-ari}
\end{subfigure}

\medskip
\begin{subfigure}[t]{0.76\linewidth}
    \centering
    \includegraphics[width=\linewidth]{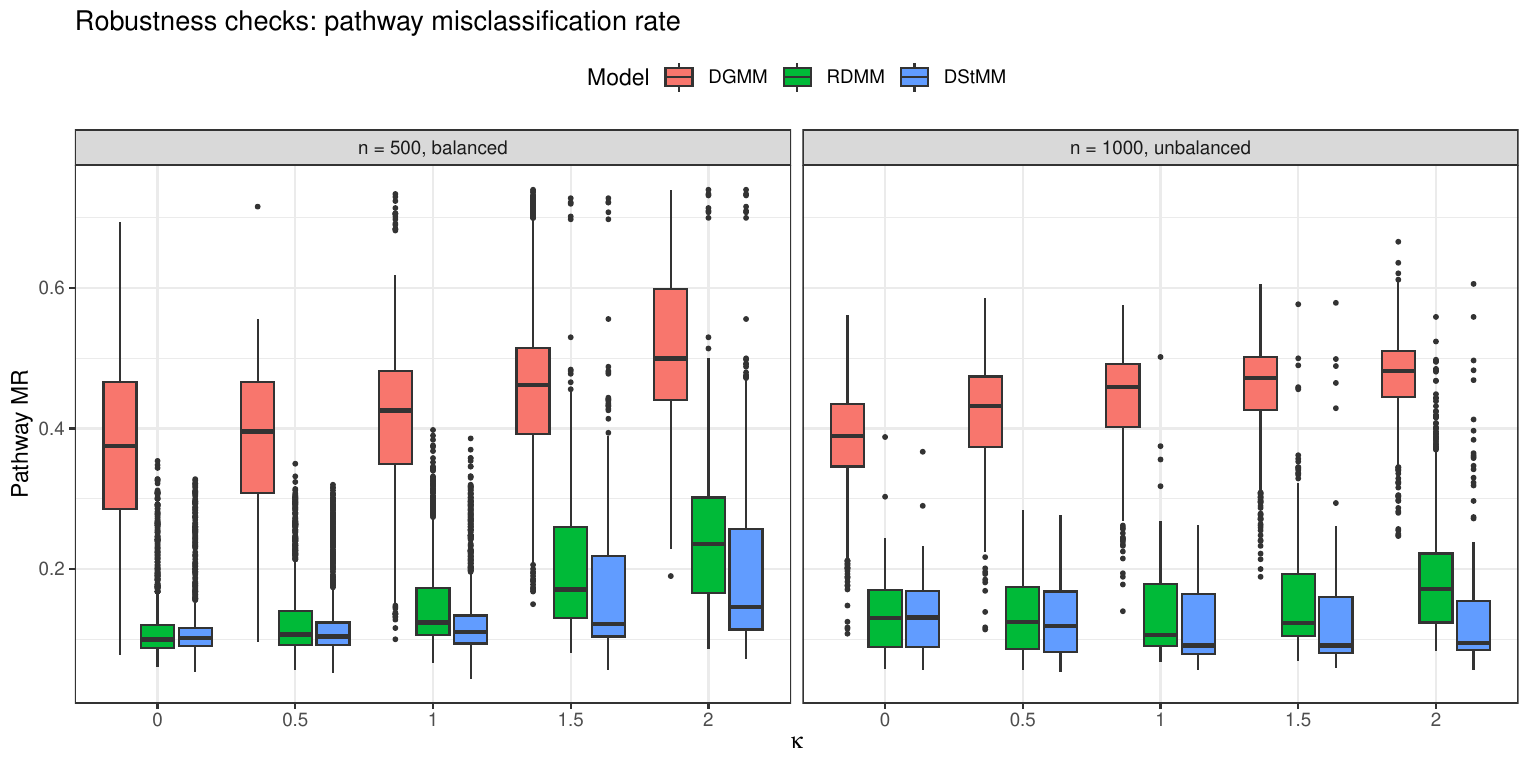}
    \caption{Complete-pathway MR.}
    \label{fig:rob-path-mr}
\end{subfigure}
\caption{Experiment 1 robustness checks at the complete-pathway level for the smaller-sample ($n=500$) and unbalanced-mixture designs.}
\label{fig:exp1-robustness-pathway}
\end{figure}

Table~\ref{tab:exp1-robustness}(a) and Figure~\ref{fig:exp1-robustness-pathway} show that the model ordering is preserved at $n=500$. At $\kappa=2$, mean pathway ARI is 0.625 for DStMM and 0.543 for RDMM, while pathway MR is 0.191 and 0.245, respectively. The DStMM--RDMM ARI advantage of 0.082 is close to the 0.094 gain in the main $n=1000$ design, showing that the benefit of modelling skewness remains clear at the smaller sample size. Figure~\ref{fig:exp1-robustness-first-layer} in Appendix~\ref{app:simulation-diagnostics} further shows stable first-layer performance for RDMM and DStMM. Table~\ref{tab:exp1-robustness}(b) shows the same qualitative conclusion under unequal mixture proportions. At symmetry, RDMM and DStMM have identical mean pathway ARI to three decimals (0.759). At $\kappa=2$, the values separate to 0.640 and 0.758, and pathway MR decreases from 0.189 under RDMM to 0.122 under DStMM. The DStMM advantage therefore persists when one local component is less prevalent, reinforcing the robustness of the skewness-dependent pathway gain. The computational diagnostics show that the added skewness modelling retains very high fitting stability; detailed timing distributions and fitting-success counts are reported in Appendix~\ref{app:simulation-diagnostics}.

\subsection{Overall simulation findings}

Taken together, the two experiments give a coherent practical interpretation. The DStMM provides a unified deep-mixture extension across the regimes considered here. Under symmetry, it retains clustering performance essentially indistinguishable from the RDMM, while increasing directional asymmetry produces progressively larger gains in complete-pathway recovery. These gains are strongest under heavier tails, where the shared scale and directional displacement act most strongly together. Relative to the DGMM, the common scale supplies heavy-tail robustness and the skewness term adds direction-sensitive flexibility. The robustness checks confirm the same qualitative pattern under reduced sample size and unequal mixture proportions. These results also clarify what the additional skewness parameters are doing. Their main impact is not to redefine the top-level groups, for which RDMM and DStMM usually have nearly identical ARI, but to improve discrimination among complete pathways. At the same time, explicitly modelling skewness prevents the degrees-of-freedom parameter from absorbing asymmetry, as seen in the increasingly negative RDMM $\widehat\nu$ bias in Experiment 1. The simulations therefore support the pathway construction on both predictive and inferential grounds: DStMM improves hierarchical classification in the regimes for which it was designed, while retaining the symmetric RDMM as an empirically effective limiting case.

\section{Real applications}
\label{sec:realdata}

\subsection{Case 1: Handwritten Digits}
\label{sec:mfeat}

We consider the \texttt{mfeat-fac} representation of the UCI Multiple Features data set \citep{uciMultipleFeatures}. It contains $n=2000$ handwritten-digit observations from ten classes, with 200 observations per digit, represented by $p=216$ profile-correlation features. We retain all ten classes and all non-degenerate variables and standardise the features before fitting. The known number of digit classes is used only to set the first-layer order $K_1=10$; individual class labels do not enter model fitting and are used only afterward to compute external clustering measures. The broader Multiple Features collection remains a common benchmark in recent clustering work, including deep and interpretable multi-view methods \citep{liu2024dmvgc, jiang2025interpretable}; the present study focuses on the single \texttt{mfeat-fac} representation.

For a controlled comparison of the component distributional assumptions, DGMM, RDMM and DStMM are fitted under the same working architecture $(K_1,K_2,r_1,r_2)=(10,1,10,3).$ Using a common latent architecture keeps the comparison focused on the distributional specification of the three deep models rather than on differences arising from architecture selection. No architecture selection is therefore performed in this application. In the DStMM fit, the first-layer skewness vectors $\bm\delta^{(1)}_a$, $a=1,\ldots,10$, are estimated separately for the ten local components, with no equality constraint across clusters. Numerical fitting follows the workflow in Section~\ref{sec:fitting-workflow}, using the observed-data likelihood, posterior-path allocation, and weighted-regression updates described above. All fitting decisions are made without individual class-label information; the labels are reserved exclusively for the post-fit ARI and MR calculations reported below.

\begin{table}[!ht]
\centering
\caption{Clustering comparisons on the UCI \texttt{mfeat-fac} data. Panel (a) reports the controlled comparison under the common working architecture $(K_1,K_2,r_1,r_2)=(10,1,10,3)$. Panels (b) and (c) report published reference results on the same single-view benchmark for broader empirical context.}
\label{tab:mfeat-comparison}
\renewcommand{\arraystretch}{0.98}
\setlength{\tabcolsep}{6pt}

\textbf{(a) Comparison under the common deep architecture.}
\par\smallskip
\begin{tabular}{llrr}
\toprule
Method & Family & ARI & MR\\
\midrule
DStMM (proposed) & Deep skew-$t$ mixture & \textbf{0.7325} & \textbf{0.2195} \\
RDMM             & Deep $t$ mixture      & 0.7232 & 0.2240 \\
DGMM             & Deep Gaussian mixture & 0.6840 & 0.2355 \\
K-means          & Partitioning          & 0.6186 & 0.2725 \\
Ward.D2          & Hierarchical          & 0.6169 & 0.2845 \\
\bottomrule
\end{tabular}

\medskip
\textbf{(b) Published ARI reference results on the same single-view
\texttt{mfeat-fac} clustering problem.}
\par\smallskip
\begin{tabular}{lrl}
\toprule
Method & ARI & Source\\
\midrule
DStMM (proposed)        & \textbf{0.7325} & This paper\\
SNN-DPC                 & 0.7186 & \citet{long2022clustering}\\
DPC                     & 0.6415 & \citet{long2022clustering}\\
LDP-MST                 & 0.5597 & \citet{long2022clustering}\\
DPC-DBFN                & 0.4885 & \citet{long2022clustering}\\
Ranked $k$-medoids      & 0.4390 & \citet{zadegan2013ranked}\\
Simple/Fast $k$-medoids & 0.4290 & \citet{zadegan2013ranked}\\
KHM                     & 0.2870 & \citet{zadegan2013ranked}\\
\bottomrule
\end{tabular}

\medskip
\textbf{(c) Published MR reference results on the same single-view
\texttt{mfeat-fac} clustering problem.}
\par\smallskip
\begin{tabular}{lrl}
\toprule
Method & MR & Source\\
\midrule
DStMM (proposed) & \textbf{0.2195} & This paper\\
HPFCM            & 0.2253 & \citet{shao2019hybrid}\\
ClusiVAT         & 0.2908 & \citet{shao2019hybrid}\\
NRCG-ONMF        & 0.3528 & \citet{zhang2016efficient}\\
K-means          & 0.3787 & \citet{zhang2016efficient}\\
CAPNMF           & 0.3995 & \citet{zhang2016efficient}\\
OFCM             & 0.3995 & \citet{shao2019hybrid}\\
ONPMF            & 0.4551 & \citet{zhang2016efficient}\\
EMONMF           & 0.4614 & \citet{zhang2016efficient}\\
CVXNMF           & 0.4692 & \citet{zhang2016efficient}\\
NMF-GCD          & 0.5234 & \citet{zhang2016efficient}\\
SPANONMF         & 0.7662 & \citet{zhang2016efficient}\\
\bottomrule
\end{tabular}

\par\smallskip
\footnotesize
For Panel (c), MR is reported as one minus the published correct-classification proportion so that smaller values consistently indicate better clustering. The ARI values in Panel (b) are reproduced from \citet{long2022clustering} and \citet{zadegan2013ranked}, while the MR values in Panel (c) are reproduced from \citet{shao2019hybrid} and \citet{zhang2016efficient}.
\end{table}

Under the common working architecture $(K_1,K_2,r_1,r_2)=(10,1,10,3)$, DStMM gives the strongest clustering performance among the three deep mixture models, with ARI $0.7325$ and MR $0.2195$. RDMM gives ARI $0.7232$ and MR $0.2240$, while DGMM gives ARI $0.6840$ and MR $0.2355$. Thus, under the controlled architecture, the empirical ordering is $\mathrm{DStMM}>\mathrm{RDMM}>\mathrm{DGMM}.$ This ordering is consistent with the simulation findings. Replacing the Gaussian pathway distribution with a heavy-tailed specification improves clustering performance, while allowing component-specific directional asymmetry yields a further improvement over the symmetric RDMM.

The magnitude of the DStMM--RDMM difference is more modest here than in the strongly skewed and heavy-tailed simulation settings, but the direction of the improvement is consistent for both external criteria: relative to RDMM, DStMM increases ARI from $0.7232$ to $0.7325$ and reduces MR from $0.2240$ to $0.2195$. Relative to DGMM, the improvement is more pronounced, with an ARI increase of $0.0485$ and an MR reduction of $0.0160$. These results suggest that both heavy-tail robustness and directional asymmetry contribute to the observed clustering performance on the \texttt{mfeat-fac} representation.

Panels (b) and (c) of Table~\ref{tab:mfeat-comparison} provide broader literature context. The DStMM ARI of $0.7325$ is higher than all published ARI reference values listed in Panel (b), including the $0.7186$ reported for SNN-DPC by \citet{long2022clustering}. Likewise, the DStMM MR of $0.2195$ is lower than all published MR reference values listed in Panel (c), the closest being $0.2253$ for HPFCM reported by \citet{shao2019hybrid}. These comparisons should be interpreted as benchmark context rather than as strictly controlled pairwise comparisons, since the published methods use different modelling, optimisation and preprocessing procedures.

Figure~\ref{fig:mfeat-delta} displays the estimated component-specific first-layer skewness vectors $\widehat{\bm\delta}^{(1)}_a$, $a=1,\ldots,10$, under the common DStMM architecture. Each row corresponds to a fitted first-layer component and each column to a retained \texttt{mfeat-fac} feature. The estimated coefficients exhibit visibly different signed patterns across components, with some components displaying substantially stronger feature-specific departures from symmetry than others. This heterogeneity provides empirical support for estimating component-specific skewness rather than imposing a common skewness direction across the ten first-layer components.

\begin{figure}[!ht]
\centering
\includegraphics[width=0.95\textwidth]{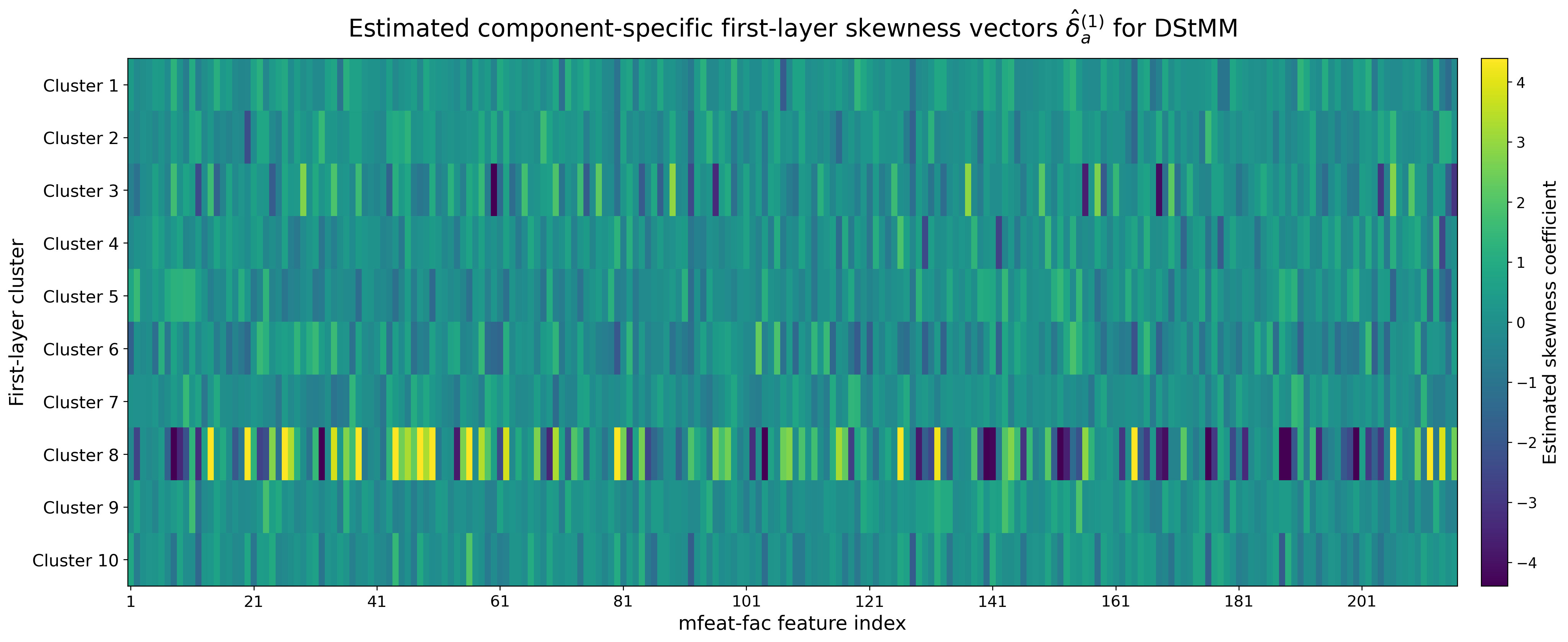}
\caption{Estimated component-specific first-layer skewness vectors $\widehat{\bm\delta}^{(1)}_a$ for the DStMM under $(K_1,K_2,r_1,r_2)=(10,1,10,3)$. Rows correspond to the ten first-layer components and columns to the retained \texttt{mfeat-fac} features. Colour represents the signed estimated skewness coefficient.}
\label{fig:mfeat-delta}
\end{figure}

\subsection{Case 2: Gas Sensor Array Drift Data}

We next consider the UCI Gas Sensor Array Drift at Different Concentrations data set~\citep{rodriguez2014calibration}. It contains $n=13{,}910$ observations represented by $p=128$ sensor-response variables, obtained from 16 chemical sensors exposed to six gases over ten temporal batches. All non-degenerate sensor variables are retained and standardised before fitting. The known number of gas classes is used only to set the first-layer order $K_1=6$; individual gas labels do not enter model fitting or architecture selection and are used only afterward to compute external clustering measures.

The sensor measurements exhibit substantial departures from Gaussianity. Across the 128 variables, the median absolute skewness is $2.377$, and the median excess kurtosis is $10.317$. Moreover, $96.1\%$ of the variables have absolute skewness greater than one and $82.0\%$ have excess kurtosis greater than three. The data therefore provide a natural setting in which to assess the contribution of both heavy-tail modelling and component-specific asymmetry.

In contrast to Case~1, where a common architecture is imposed to isolate the effect of the component distribution, the deeper architecture is selected separately within each model family in this application. We fix $K_1=6$ and consider $K_2\in\{1,2,3,4\}, (r_1,r_2)\in\{(8,2),(12,3),(16,4)\}.$ For each of DGMM, RDMM, and DStMM, the architecture with the smallest BIC returned by the corresponding fitting implementation is retained for the final comparison; for DStMM, the implemented count is given in \eqref{eq:bic-df-new}. The same counting convention is applied across all candidate architectures within each family. Allowing $K_2>1$ also permits the implemented selection criterion to choose a non-trivial second mixture layer when it is favoured within the candidate grid.

\begin{table}[!htbp]
\centering
\caption{Clustering comparisons on the UCI Gas Sensor Array Drift data. Panel (a) reports the architecture selected by the implementation-based BIC criterion within each deep-model family. Panel (b) gives selected published ARI reference results on the same gas-sensor data set for broader empirical comparison.}
\label{tab:gas-comparison}
\renewcommand{\arraystretch}{0.98}
\setlength{\tabcolsep}{6pt}

\textbf{(a) Comparison under the implementation-based BIC selections.}
\par\smallskip
\begin{tabular}{lrrrrrr}
\toprule
Model & $K_1$ & $K_2$ & $r_1$ & $r_2$ & ARI & MR\\
\midrule
DStMM (proposed) & 6 & 3 & 16 & 4 & \textbf{0.2989} & \textbf{0.4717}\\
RDMM             & 6 & 4 & 16 & 4 & 0.2014 & 0.5357\\
DGMM             & 6 & 1 & 12 & 3 & 0.1803 & 0.5438\\
\bottomrule
\end{tabular}

\medskip
\textbf{(b) Published ARI reference results on the same gas-sensor data set.}
\par\smallskip
\begin{tabular}{lrl}
\toprule
Method & ARI & Source\\
\midrule
DStMM (proposed)                    & \textbf{0.2989} & This paper\\
AC Ward (Extended IF + t-SNE)      & 0.2972 & Guyeux et al. (2019)\\
Birch (Extended IF + t-SNE)        & 0.2926 & Guyeux et al. (2019)\\
$K$-means++ (IF + t-SNE)           & 0.2908 & Guyeux et al. (2019)\\
$K$-means (IF + t-SNE)             & 0.2904 & Guyeux et al. (2019)\\
Birch (IF + t-SNE)                 & 0.2820 & Guyeux et al. (2019)\\
GMM (IF + t-SNE)                   & 0.2666 & Guyeux et al. (2019)\\
Ward                                & 0.2007 & Dey et al. (2019)\\
RPHash-TWRP                         & 0.1759 & Dey et al. (2019)\\
$K$-means                           & 0.1539 & Dey et al. (2019)\\
\bottomrule
\end{tabular}

\par\smallskip
\footnotesize Published values in Panel (b) are selected reference results reproduced from the cited studies. IF denotes Isolation Forest. The Guyeux et al. configurations combine outlier preprocessing and t-SNE dimensionality reduction with the indicated clustering procedure, whereas DStMM is fitted directly to the standardised 128-dimensional sensor-response variables.
\end{table}

Among the three deep mixture formulations, Table~\ref{tab:gas-comparison}(a) shows that DStMM gives the strongest clustering performance. Under the corresponding implementation-based BIC criterion, DStMM selects $(K_1,K_2,r_1,r_2)=(6,3,16,4),$ with ARI $0.2989$ and MR $0.4717$. The selected RDMM gives ARI $0.2014$ and MR $0.5357$, while the selected DGMM gives ARI $0.1803$ and MR $0.5438$. Hence the empirical ordering is $\mathrm{DStMM}>\mathrm{RDMM}>\mathrm{DGMM}.$ As in Case~1 and the simulation study, the results indicate a gain from heavy-tail modelling over the Gaussian formulation, followed by a further improvement when component-specific asymmetry is incorporated. The advantage of DStMM is not restricted to the final selected architectures. Under the common architecture $(K_1,K_2,r_1,r_2)=(6,3,16,4),$ for example, DStMM attains ARI $0.2989$ and MR $0.4717$, compared with ARI $0.2037$ and MR $0.5313$ for RDMM. Improvements are also observed for several other combinations of $K_2$ and the latent dimensions, indicating that the gain from the skew-$t$ specification is not driven by a single isolated architecture.

The published reference results in Table~\ref{tab:gas-comparison}(b) provide additional empirical context. \citet{dey2019clustering} considered the same $13{,}910$ observations and 128 sensor variables and reported ARI values of $0.1539$ for $K$-means, $0.1759$ for RPHash-TWRP and $0.2007$ for Ward clustering. DStMM, with ARI $0.2989$, also compares favourably with the configurations reported by \citet{guyeux2019introducing}, including Isolation-Forest--preprocessed GMM ($0.2666$), Birch ($0.2820$), $K$-means ($0.2904$), and $K$-means++ ($0.2908$), as well as Extended-Isolation-Forest--preprocessed Birch ($0.2926$) and Ward clustering ($0.2972$). Beyond clustering accuracy, the implementation-based BIC criterion selects a DStMM with a non-trivial second mixture layer, $K_2=3$, rather than a candidate with a single second-layer component. Within the candidate grid considered here, the fitted model therefore retains additional heterogeneity through its deeper latent representation.

Therefore, the gas-sensor application reinforces the empirical pattern observed in the handwritten-digit example. For these high-dimensional data, which exhibit pronounced skewness and heavy tails, DStMM improves on both DGMM and RDMM while retaining a direct probabilistic model for the original sensor measurements. Within the implemented selection scheme, the retained architecture indicates that the fitted flexibility is expressed not only through the skew-$t$ component shape but also through a non-trivial second mixture layer.

\section{Discussion}
\label{sec:discussion}

The central statistical feature of the DStMM is the common positive scale propagated along a complete latent pathway. Conditional on that scale, every transition remains Gaussian; after the latent states are collapsed, the same variable simultaneously controls variance inflation and directional displacement in an exact GHST pathway distribution. Proposition~\ref{prop:path-collapse} and Corollary~\ref{cor:nested-models} therefore place the model between three familiar regimes: Gaussian deep mixtures, symmetric heavy-tailed deep mixtures, and single-layer skew-$t$ factor-analytic mixtures. The exact pathway likelihood is important because posterior path allocation and likelihood-based architecture comparison do not depend on a Monte Carlo approximation, even though stochastic simulation is used to update the local latent hierarchy. Section~\ref{sec:identifiability} separates pathway-level and local non-uniqueness. At the observation level, uniqueness of the positive-weight pathway mixture up to label permutation is treated as an explicit separation assumption. At the local level, the loading coordinates are not unique, and the reported implementation-based BIC uses the rotational correction in \eqref{eq:loading-ident-new} without an additional intermediate-coordinate gauge subtraction. The baseline active set $\mathcal A=\{1\}$ fixes the allocation of directional asymmetry to the observation transition in the reported DStMM fits; richer active sets can introduce further allocation ambiguities. The numerical results indicate when the additional skewness mechanism is most useful. Under symmetry, DStMM behaves similarly to RDMM, whereas stronger directional asymmetry produces progressively larger improvements in complete-pathway recovery, particularly when the common scale is more variable. This interaction is consistent with the model construction: the shared $W$ couples tail inflation and directional displacement, so the skewness term matters most when scale mixing is itself substantial. In both real-data examples the proposed model improves on the corresponding Gaussian and symmetric heavy-tailed deep alternatives. For the gas-sensor data, the implementation-based BIC criterion selects a DStMM with $K_2=3$, so the retained model includes a non-trivial second mixture layer within the candidate grid. Its ARI lies at the upper end of the selected published reference results in Table~\ref{tab:gas-comparison}(b), but those comparisons are contextual rather than controlled because preprocessing, dimensionality reduction, and optimisation differ across studies.

Several limitations remain. The complete-path enumeration grows as $|\mathcal S|=\prod_lK_l$, so computational cost can increase rapidly with depth even when local parameters are shared. The observed-data likelihood is multimodal, and with $M=1$ the stochastic updates are also subject to Monte Carlo variability; multiple starts and likelihood-based fit selection are therefore important in demanding applications, while a larger $M$ can be used when smoother Monte Carlo updates are required. The BIC accounting reported here follows the implemented diagonal-specific-covariance and rotational-degree convention; alternative covariance structures or different identification conventions would require a correspondingly revised parameter count. Architecture selection currently relies on a finite candidate grid for depth, mixture orders, latent dimensions, degrees-of-freedom pooling, and the skewness active set; joint or adaptive selection of these quantities remains an open problem. These limitations suggest several extensions: structured or partially pooled skewness across components and layers, scalable path-search or variational approximations for larger architectures, formal selection of the active set, and alternative asymmetric heavy-tailed kernels. The exact pathway representation provides a useful starting point for these developments because it separates the distributional theory at the observation level from the computational approximation used for the local latent states.

\appendix
\section*{Appendix}
\addcontentsline{toc}{section}{Appendix}

\section{Simulation diagnostics}
\label{app:simulation-diagnostics}

\setcounter{figure}{0}
\renewcommand{\thefigure}{\thesection.\arabic{figure}}
\setcounter{table}{0}
\renewcommand{\thetable}{\thesection.\arabic{table}}

\subsection{Experiment 2 parameter recovery}

\begin{figure}[!ht]
\centering
\begin{subfigure}[t]{0.65\linewidth}
    \centering
    \includegraphics[width=\linewidth]{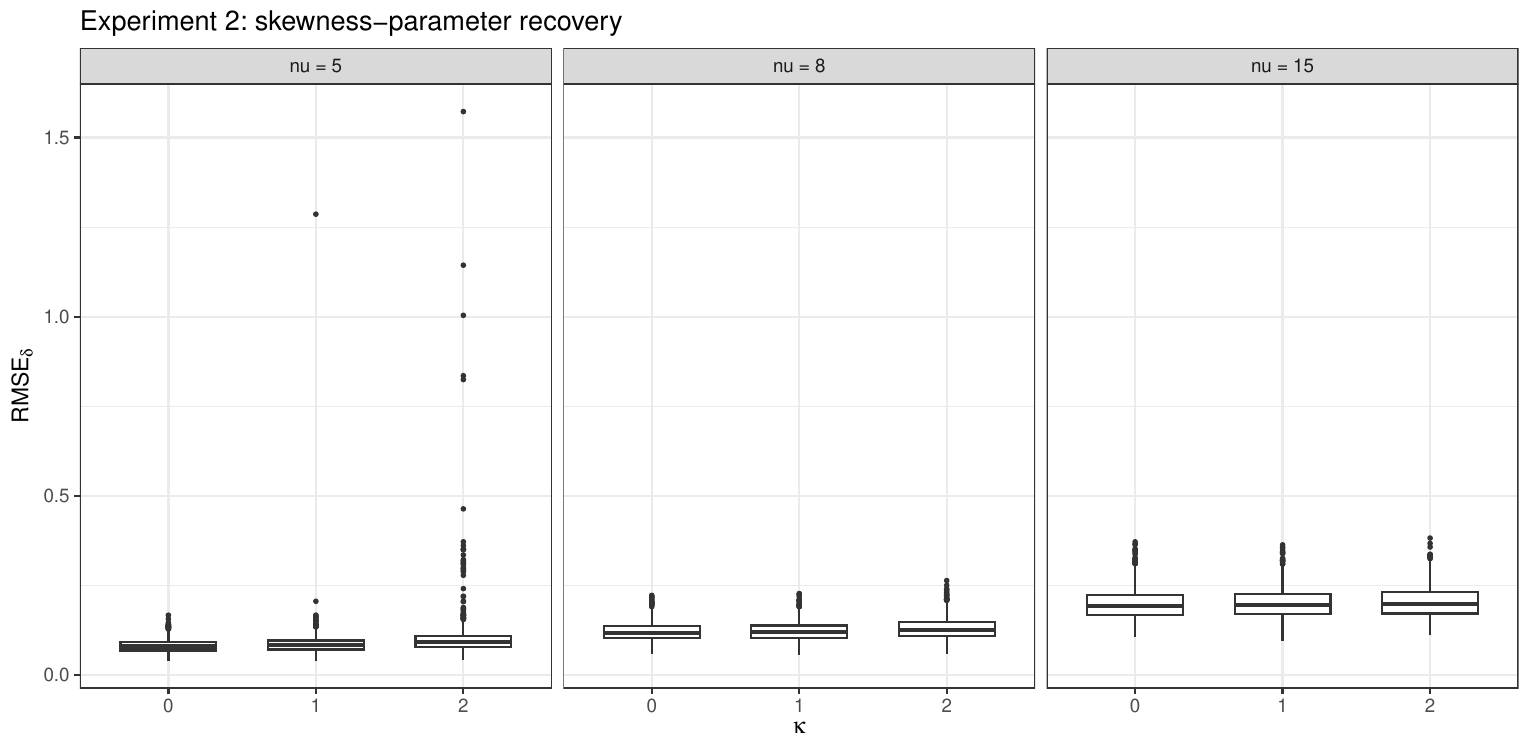}
    \caption{Skewness-vector RMSE.}
    \label{fig:exp2-delta-rmse}
\end{subfigure}

\medskip
\begin{subfigure}[t]{0.65\linewidth}
    \centering
    \includegraphics[width=\linewidth]{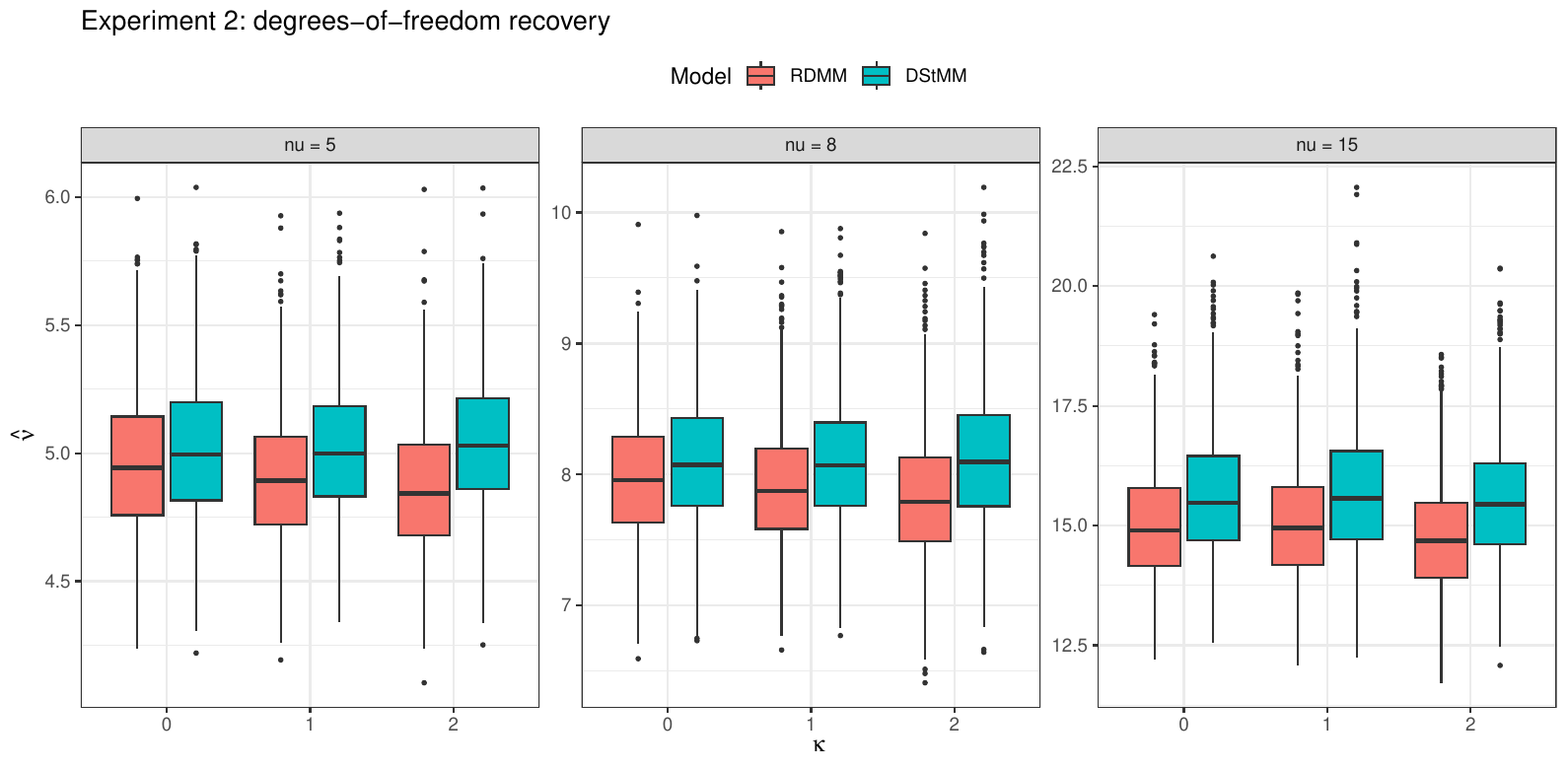}
    \caption{Recovery of $\nu$.}
    \label{fig:exp2-nu-recovery}
\end{subfigure}
\caption{Experiment 2 parameter-recovery diagnostics across the $(\nu,\kappa)$ factorial design.}
\label{fig:exp2-recovery-composite}
\end{figure}

The distributions in Figure~\ref{fig:exp2-recovery-composite} complement Table~\ref{tab:exp2-details}(b). As $\nu$ increases, the latent scale $W$ becomes more concentrated, and the recovery plots reflect the corresponding reduction in scale-mixing variation. Across these lighter-tailed settings, clustering performance remains strong while the parameter-recovery patterns continue to track the model's mean--variance structure.

\subsection{First-layer robustness}

\begin{figure}[tbp]
\centering
\begin{subfigure}[t]{0.76\linewidth}
    \centering
    \includegraphics[width=\linewidth]{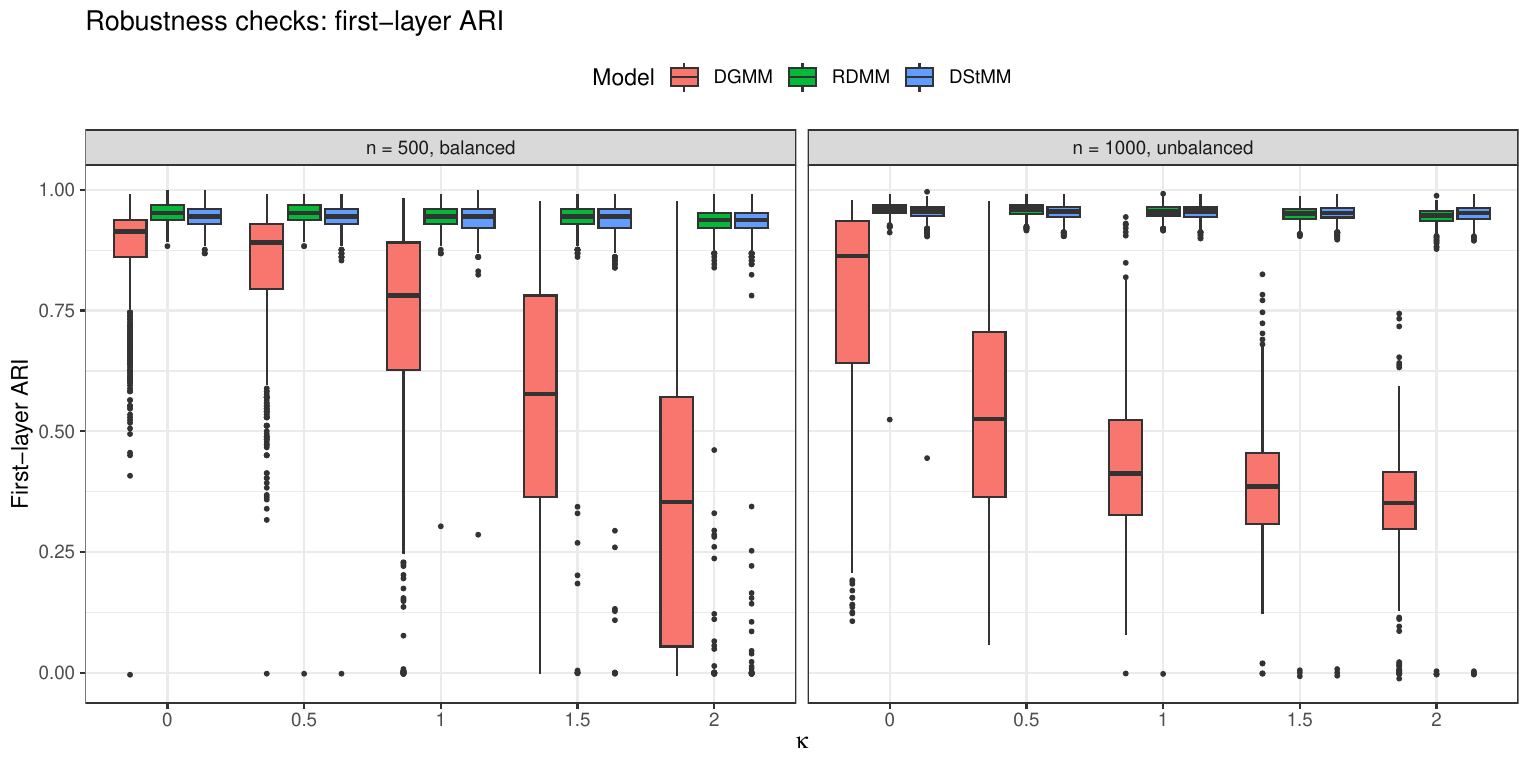}
    \caption{First-layer ARI.}
    \label{fig:rob-l1-ari}
\end{subfigure}

\medskip
\begin{subfigure}[t]{0.76\linewidth}
    \centering
    \includegraphics[width=\linewidth]{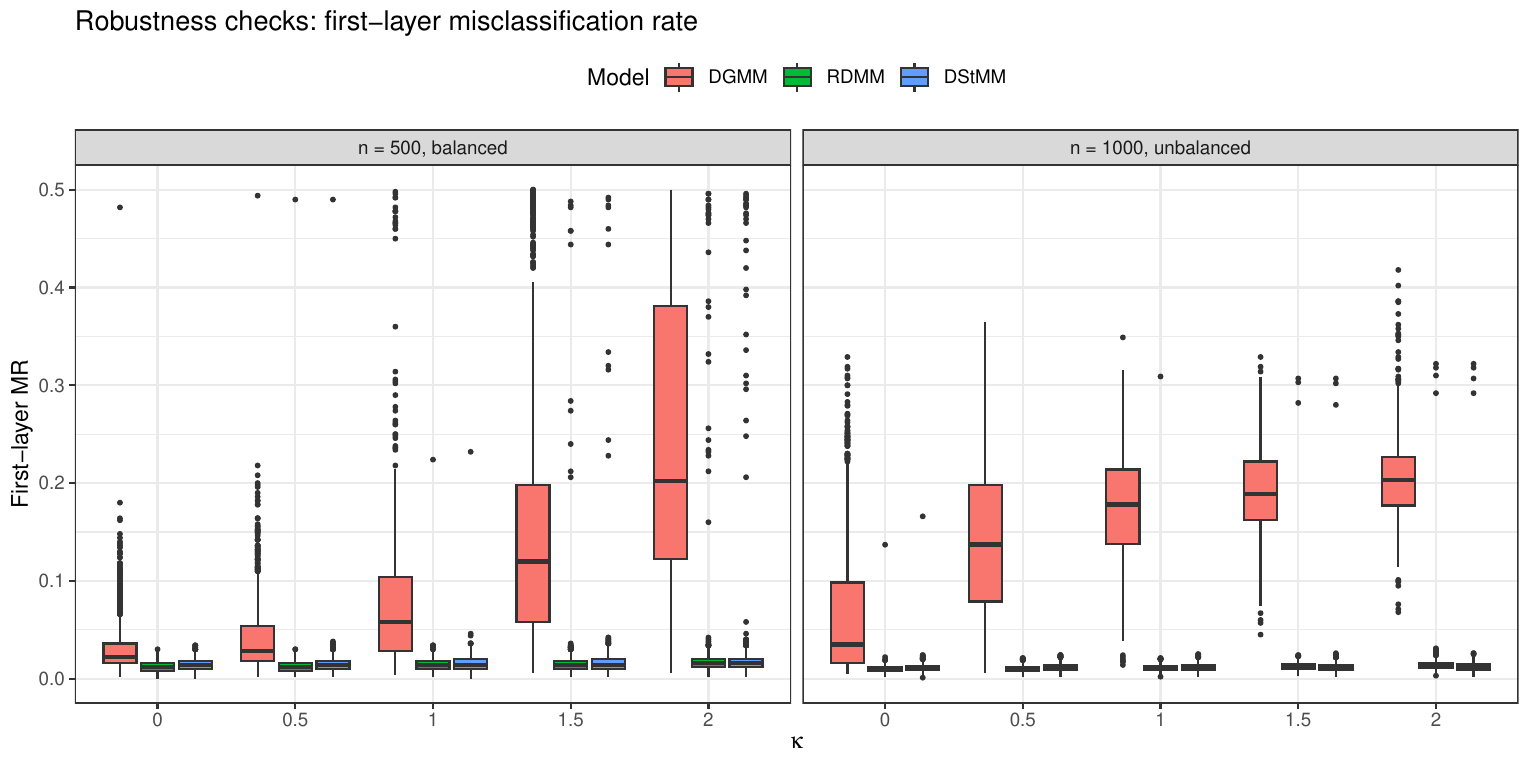}
    \caption{First-layer MR.}
    \label{fig:rob-l1-mr}
\end{subfigure}
\caption{Experiment 1 robustness checks at the first layer for the smaller-sample ($n=500$) and unbalanced-mixture designs.}
\label{fig:exp1-robustness-first-layer}
\end{figure}

Figure~\ref{fig:exp1-robustness-first-layer} complements Table~\ref{tab:exp1-robustness}. Across both robustness designs, RDMM and DStMM retain similar first-layer performance, reinforcing the conclusion that the main benefit of modelling skewness appears in complete-pathway discrimination rather than in the coarse first-layer partition.

\subsection{Computational profile and fitting stability}
\label{app:computational-diagnostics}

We summarise the computational profile and fitting stability using the same simulation runs. The timing summaries are based on successful fits.

\begin{figure}[tbp]
\centering
\includegraphics[width=0.65\linewidth]{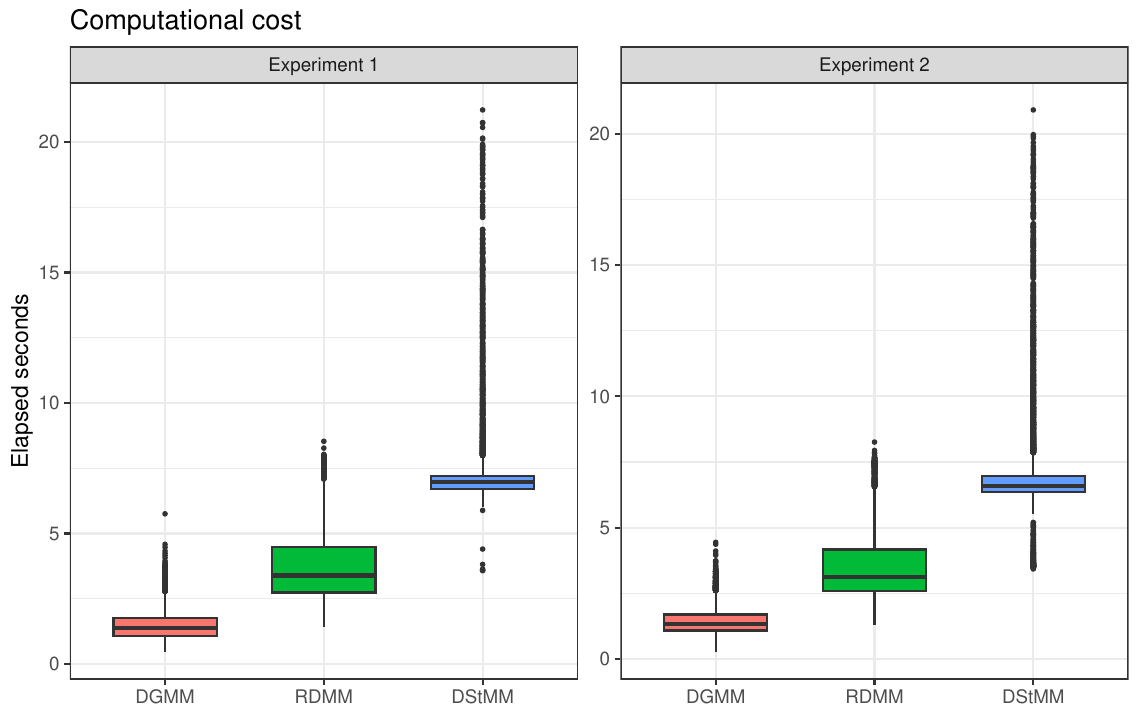}
\caption{Elapsed time per successful fit for DGMM, RDMM, and DStMM across the simulation experiments.}
\label{fig:runtime-comparison}
\end{figure}

Figure~\ref{fig:runtime-comparison} summarises the elapsed time per successful fit. In Experiment 1, the median times are 1.38 seconds for DGMM, 3.38 seconds for RDMM, and 6.97 seconds for DStMM; in Experiment 2, the corresponding medians are 1.33, 3.11, and 6.58 seconds. The DStMM timing profile reflects its Monte Carlo E-step together with the additional skewness updates, while remaining within single-digit seconds in these simulation settings.

Fitting stability is uniformly high. All 5000 Experiment 1 main fits succeed for each model. Across the 10,000 Experiment 1 robustness fits, DGMM, RDMM and DStMM have 9997, 10000 and 9999 successful fits, respectively. In Experiment 2, RDMM and DStMM both achieve 9000 successful fits out of 9000, while DGMM achieves 8988. These diagnostics show that DStMM retains excellent numerical stability across the simulation settings.

\FloatBarrier
\clearpage
\bibliographystyle{apalike}
\bibliography{refs}

\end{document}